\documentclass[letterpaper,twocolumn,10pt]{article}
\usepackage{usenix2019_v3}

\usepackage{amsmath,amssymb,amsfonts,amsthm}
\usepackage{centernot}
\usepackage{graphicx}
\usepackage{booktabs}
\usepackage{tabularx}
\usepackage{array}
\usepackage{float}

\makeatletter
\newcommand{\FloatBarrier}{%
  \ifx\@deferlist\@empty\else
    \clearpage
  \fi}
\makeatother

\usepackage{listings}
\usepackage{algorithm}
\usepackage{algpseudocode}
\usepackage{subcaption}
\usepackage{xspace}
\usepackage{xcolor}
\usepackage{tikz}
\usetikzlibrary{arrows.meta, backgrounds, positioning, calc, shapes.geometric, fit, shadows, matrix, patterns, decorations.pathreplacing}
\microtypesetup{spacing=false}
\AtBeginDocument{\DeclareMathAlphabet{\mathcal}{OMS}{cmsy}{m}{n}}

\newcolumntype{L}[1]{>{\raggedright\arraybackslash}p{#1}}
\newcolumntype{C}[1]{>{\centering\arraybackslash}p{#1}}
\newcolumntype{R}[1]{>{\raggedleft\arraybackslash}p{#1}}

\newcommand{\cct}{\mbox{CCT}\xspace}
\newcommand{\sit}{\mbox{SIT}\xspace}

\newcommand{\lineagebench}{\mbox{IdentityLineageBench}\xspace}

\newtheorem{theorem}{Theorem}
\newtheorem{definition}{Definition}
\newtheorem{proposition}{Proposition}
\newtheorem{corollary}{Corollary}

\definecolor{slate}{RGB}{112,128,144}
\definecolor{emerald}{RGB}{16,163,127}
\definecolor{navy}{RGB}{24,49,83}
\definecolor{crimson}{RGB}{190,30,45}
\definecolor{amber}{RGB}{217,119,6}
\definecolor{cobalt}{RGB}{29,78,216}
\definecolor{teal}{RGB}{13,148,136}
\definecolor{purple}{RGB}{126,34,206}
\definecolor{softgray}{RGB}{247,249,252}
\definecolor{bordergray}{RGB}{218,224,233}
\definecolor{darkslate}{RGB}{60,72,88}
\definecolor{linegray}{RGB}{140,155,175}

\hypersetup{
  pdftitle={The Cognitive Continuity Test: Verifying Governed State Transitions in Persistent AI Agents},
  pdfauthor={Jun He; Deying Yu},
  pdfsubject={Verification and evaluation methodology for persistent cognitive identity transitions in autonomous agents},
  pdfkeywords={cognitive continuity test, persistent cognitive identity, transition witness, identity lineage, situated identity, IdentityLineageBench}
}

\begin{document}
\raggedbottom

\title{\bf The Cognitive Continuity Test:\\Verifying Governed State Transitions in Persistent AI Agents}

\ifdefined\CCTAnonymous
\author{Anonymous authors}
\hypersetup{pdfauthor={}}
\else
\author{
  {\rm Jun He}\\
  OpenKedge.io
  \and
  {\rm Deying Yu}\\
  OpenKedge.io
}

\fi

\maketitle

\begin{abstract}
Persistent AI agents revise beliefs, consolidate memory, and replace execution substrates. Similar successor states can accompany differently authorized transition claims, while legitimate development can change state substantially. We introduce the Cognitive Continuity Test (CCT), a policy-relative contract for verifying submitted transitions using scoped authority, provenance, deterministic application, semantic predicates, and candidate-persistence receipts. CCT distinguishes verified admissibility, affirmative violation, and unresolved required evidence. Separation results concern transition claims rather than live runtime identity; soundness is conditional on the specified checker and evaluator assumptions.

IdentityLineageBench provides 24 generated transition families. The reference post-resolution verifier matches all 576 canonical held-out labels; lexical state similarity and a lineage-only diagnostic baseline admit 60.0\% and 80.0\% of invalid fixtures. These comparisons establish synthetic conformance, not superiority to a policy-aware deployed system. Signed adversarial regressions cover fabricated interaction counts, unsupported belief changes, and mixed missing/contradictory evidence. SIT behavior and actual model migration remain unmeasured. An 18,000-execution valid-path study measures a 6.21\,ms default median on resident inputs. We specify the additional activation and recovery obligations needed for deployment.
\end{abstract}

\section{Introduction}
\label{sec:introduction}
Persistent AI agents update memory, revise beliefs, change software, and recover from failures. Their explicit state increasingly carries records of user instructions, prior actions, and access relationships~\cite{park2023generative,packer2023memgpt}. A system admitting a proposed successor must decide whether its changes are authorized and supported by the required evidence.

\subsection{The Verification Question}
\label{sec:intro-illusion}\label{sec:intro-question}
Behavioral similarity and transition admissibility answer different questions. An unauthorized principal can submit a copied state that retains the predecessor's memories and behavior. Conversely, authorized consolidation or evidence-supported belief revision can substantially change that state. Memory and personality benchmarks measure useful behavioral properties~\cite{wang2024incharacter,wu2025longmemeval}; they do not attempt to authenticate succession. The issue is therefore how to evaluate a \emph{submitted transition claim}, including its evidence, rather than how to replace those benchmarks.

Consider two claims proposing the same successor value. One carries authority for the declared operation; the other carries an authentic credential from an unauthorized principal. A state-only observation cannot distinguish these claims. This separation concerns the submitted records: copying the complete authorized state and witness preserves their historical audit validity. Detecting a live impersonator additionally requires runtime authentication and an activation protocol.

We introduce the Cognitive Continuity Test (\cct), a policy-relative contract for verifying explicit agent-state transitions. Given predecessor $X_t$, witness $\tau_t$, successor $X_{t+1}$, and trusted context, CCT returns \textsc{valid}, \textsc{invalid}, or \textsc{indeterminate}. A known contradiction yields \textsc{invalid}; unresolved required evidence yields \textsc{indeterminate} when no contradiction is established. This distinction supports retry and audit without admitting unsupported transitions.

\subsection{Scope and Contributions}
\label{sec:scope}\label{sec:intro-contributions}
CCT specializes established authorization, provenance, and replay mechanisms to typed identity-state changes. Its contributions are:
\begin{enumerate}
\item A non-circular witness contract binding proposal authority, dependencies, validation records, predecessor/successor digests, and a candidate-persistence receipt (Section~\ref{sec:system-model}).
\item Eight policy predicates and a ternary verification procedure, with separation results over submitted claims and conditional soundness under explicit checker and evaluator assumptions (Sections~\ref{sec:continuity-test}--\ref{sec:continuity-theory}). These results delimit the contract; they do not prove the implementation or an external evaluator sound.
\item \lineagebench, a generated suite of 24 transition families, identity-disjoint calibration, isolated checker ablations, and signed adversarial regression tests (Sections~\ref{sec:lineagebench}--\ref{sec:comparative-results}).
\item An executable \emph{post-resolution} verifier and reproducible classification artifacts. Historical policy retrieval/binding remains a caller obligation. The same reference implementation supplies the resident valid-path overhead study (Section~\ref{sec:verifier-overhead}).
\end{enumerate}
The supported claim is conformance of structured transition records to declared policy. Natural-language extraction quality, live-runtime exclusivity, operational crash recovery, and cross-model behavioral preservation require additional validation. Section~\ref{sec:related-work} identifies inherited mechanisms and the policy-aware comparisons still needed.

\section{Behavioral Identity and Transition Evidence}
\label{sec:triad}\label{sec:triad-definitions}
Human-referential fidelity asks whether an agent resembles a particular person. Situated identity (\sit) asks whether its behavior is consistent with a frozen developmental history~\cite{sit2026}. CCT instead evaluates the supplied transition's authority, provenance, and policy predicates. A state/response-mechanism pair can pass SIT while being accompanied by an unauthorized succession claim (Section~\ref{sec:sit-theory}). Merely copying stored history does not guarantee passing behavior.

\label{sec:sit-not-implies-cct}\label{sec:what-continuity-is-not}\label{sec:positive-continuity}
Continuity here is an operational, policy-relative relation over records. It permits learning and forgetting when the declared contract permits them. It neither establishes conscious or metaphysical identity nor guarantees that a runtime follows its stored norms. Authentic evidence can also record false external information; provenance integrity alone does not prevent memory poisoning.

\section{State, Witness, and Trust Model}
\label{sec:system-model}

\subsection{Explicit State and Mutations}
\label{sec:cognitive-state}
\begin{definition}[Cognitive Identity State]
\label{def:cognitive-state}
The explicit identity state at logical epoch $t$ is
\begin{equation}
X_t=\langle H_t,M_t,K_t,B_t,R_t,N_t,S_t\rangle.
\label{eq:state-tuple}
\end{equation}
\end{definition}
$H_t$ is the authenticated event chronicle; $M_t$ the derived, mutable memory store; $K_t$ acquired knowledge; $B_t$ beliefs with values, justifications, and revision history; $R_t$ relationships and disclosure metadata; $N_t$ governance and policy commitments; and $S_t$ explicit self-model fields. Model weights and the response-generating runtime are external to this tuple. Equal tuples do not denote distinct authenticated runtime instances.

Each event records content, provenance, and a timestamp in one policy-declared ordered domain, distinct from logical epochs. The historical-preservation preorder $H\le_{\mathrm{hist}}H'$ preserves prior event commitments, their order, and timestamps, either directly or through authorized tombstones. Working-memory compaction changes $M$ without erasing those commitments. Evidence establishes what was recorded, not whether an external assertion was objectively true.

\begin{equation}
d_t=\operatorname{Hash}(\mathsf{StateDomain},\operatorname{Enc}(X_t)).
\label{eq:state-digest}
\end{equation}
$\operatorname{Enc}$ is canonical encoding and every hash domain binds protocol version $v$. We abbreviate this state digest by $\operatorname{Hash}(X_t)$ (Appendix~\ref{app:envelope-metadata}).

\label{sec:state-mutations}
A mutation manifest $\Delta_t=(\delta_1,\ldots,\delta_m)$ contains typed operations with targets, arguments, and preconditions. Deterministic application computes the entire candidate or fails:
\begin{equation}
\widehat X_{t+1}=\operatorname{Apply}(X_t,\Delta_t).
\label{eq:delta-ops}
\end{equation}
A state-neutral \texttt{CLAIM\_SUCCESSION} is a nonempty operation requiring succession authority; unchanged payload does not make that authorization obligation vacuous. Runtime metadata such as execution epochs may be represented separately by an implementation adapter.

\subsection{The Cognitive Transition Witness}
\label{sec:transition-witness}
The proposal and its digest are
\begin{equation}
\begin{aligned}
q_t&=\langle\mathit{id},t,p_t,\Delta_t,A_t,E_t\rangle,\\
d_t^{\mathrm{prop}}&=\operatorname{Hash}(\mathsf{ProposalDomain},\operatorname{Enc}(q_t)),
\end{aligned}
\label{eq:proposal}
\end{equation}
where $p_t=d_t$. Authority bundle $A_t$ contains principal, role, scope, and signature records. Each signature binds identity, epoch, predecessor, ordered mutations, provenance, and its signer's principal/role/scope, excluding $A_t$ itself. $E_t$ names input dependencies, source citations, and temporal evidence; the trusted resolver supplies their contents independently of the candidate.

Validators bind their result to the proposal:
\begin{equation}
\begin{aligned}
V_t=\{\langle &\mathit{check},\mathit{evaluator},\mathit{ver},\mathit{policy\_ver},\\
&d_t^{\mathrm{prop}},\mathit{result},s,\sigma^{\mathrm{eval}}\rangle\}.
\end{aligned}
\label{eq:validation-records}
\end{equation}
Native checks are recomputed. External checks require authorized evaluators, compatible versions, fresh scoped signatures, and a policy-defined quorum. Authentic signatures establish attribution; correctness of external semantic judgments remains a separate assumption.

\begin{equation}
\begin{aligned}
\tau_t^{\mathrm{core}}&=\langle q_t,V_t\rangle,\\
d_t^{\mathrm{core}}&=\operatorname{Hash}(\mathsf{WitnessCoreDomain},\operatorname{Enc}(\tau_t^{\mathrm{core}})).
\end{aligned}
\label{eq:witness-core}
\end{equation}
A trusted persistence service then signs
\begin{equation}
\begin{aligned}
R_t^{\mathrm{commit}}=\langle &\mathit{rcpt\_id},\mathit{kernel\_id},\mathit{id},t,p_t,\\
&d_{t+1},d_t^{\mathrm{core}},s_{\mathrm{commit}},\sigma_{\mathrm{kernel}}\rangle.
\end{aligned}
\label{eq:commit-receipt}
\end{equation}
The signature covers every preceding field. The completed witness is
\begin{equation}
\tau_t=\langle\tau_t^{\mathrm{core}},R_t^{\mathrm{commit}}\rangle.
\label{eq:transition-witness}
\end{equation}
No object contains its own digest. The retained wire name \texttt{commit\_receipt} denotes \emph{candidate persistence}, including the finalized validation records. It neither activates a branch head nor certifies semantic admissibility. This receipt must not be conflated with the Continuity Kernel's authoritative activation transaction~\cite{ck2026}. CCT may reject a correctly persisted candidate. Section~\ref{sec:disc-activation} states the separate activation obligations.

\subsection{Historical Policy and Trusted Inputs}
\label{sec:continuity-policy}
A policy $\Pi_t$ specifies lineage, authority, history, time, belief, relationship, normative, and provenance predicates. The full contract requires pre-state governance to bind an immutable reference
\begin{equation}
\begin{aligned}
N_t.\mathit{policy\_ref}&=\langle\mathit{id}_{\Pi},\mathit{ver}_{\Pi},d_{\Pi}\rangle,\\
d_{\Pi}&=\operatorname{Hash}(\mathsf{PolicyDomain},\operatorname{Enc}(\Pi_t)).
\end{aligned}
\label{eq:policy-digest}
\end{equation}
The reference is covered by $p_t$. A trusted resolver $\Omega$ retrieves the artifact, checks its canonical digest and version compatibility, and returns
\begin{equation}
\operatorname{ResolvePolicy}(N_t.\mathit{policy\_ref},v,\Omega)
=\langle\widehat\Pi_t,s_{\mathrm{pol}}\rangle.
\label{eq:resolve-policy}
\end{equation}
A match yields $\widehat\Pi_t=\Pi_t$ and \textsf{PASS}; unavailability yields $\bot,\mathsf{UNKNOWN}$; a present digest/version contradiction yields $\bot,\mathsf{FAIL}$. Later amendments do not reinterpret old transitions. \emph{The executable artifact accepts a caller-supplied resolved policy; this historical binding interface is specified but not implemented or measured.}

Trusted context $\Gamma=(\mathit{id}^*,t^*,v,\mathit{branch\_evidence})$ fixes the claim's scope. Credential histories, acquisition records, and any required head evidence must be independently authenticated. An attacker may submit malformed or unauthorized proposals and validly signed but inadmissible mutations. The threat model does not assume that signatures make their contents true. A malicious resolver, dishonest persistence service, or unsound semantic evaluator violates explicit trust assumptions. A historical key-status check also requires authenticated temporal anchoring; a signer's self-declared timestamp cannot exclude backdating after compromise.

\subsection{Declarative Admissibility}
\label{sec:admissible-relation}
\begin{definition}[Admissible Transition Relation]
\label{def:admissible-transition}
For fixed $\Gamma,\Omega$ and successfully resolved $\Pi_t$, write
\begin{equation}
X_t\overset{\tau_t}{\rightsquigarrow}_{\Pi_t}X_{t+1}
\label{eq:admissible-relation}
\end{equation}
exactly when (1) predecessor, identity, epoch, protocol version, and policy-required head evidence agree; (2) $X_{t+1}=\operatorname{Apply}(X_t,\Delta_t)$; (3) pre-state governance authorizes every operation; (4) policy-required dependencies resolve with matching content commitments; (5) all declared semantic predicates $\mathcal P_k$, $k\in\{\mathrm{hist,temp,bel,rel,norm}\}$ hold; and (6) a historically valid persistence receipt binds the scope and all three relevant digests.
\end{definition}
This relation labels the \emph{supplied witness}. The shorthand $X_t\rightsquigarrow_{\Pi_t}X'\iff\exists\tau:X_t\overset{\tau}{\rightsquigarrow}_{\Pi_t}X'$ asks a different, existential question about state values. Neither establishes possession by a live runtime. Operational checks below implement the clauses relative to their declared evidence and trust interfaces.

\section{The Cognitive Continuity Test}
\label{sec:continuity-test}
\subsection{Ternary Decision Contract}
\label{sec:cct-definition}
\begin{definition}[Checker Result Domain]
\label{def:checker-domain}
A checker returns $\mathsf{PASS}$ for a verified predicate, $\mathsf{FAIL}$ for affirmative violation, and $\mathsf{UNKNOWN}$ for insufficient required evidence. Missing evidence and a present contradiction are distinct conditions.
\end{definition}

\begin{definition}[CCT Aggregation]
\label{def:cct}
For fixed mandatory states $X_t,X_{t+1}$ and trusted contexts, let $C_{\mathrm{pol}}$ be the result of historical policy resolution. Define the evaluated check set
\begin{equation}
\begin{aligned}
\mathcal C=\{&C_{\mathrm{pol}},C_{\mathrm{apply}},I_{\mathrm{lin}},I_{\mathrm{auth}},I_{\mathrm{prov}},\\
&I_{\mathrm{hist}},I_{\mathrm{temp}},I_{\mathrm{bel}},I_{\mathrm{rel}},I_{\mathrm{norm}}\}.
\end{aligned}
\label{eq:verifier-check-set}
\end{equation}
$C_{\mathrm{apply}}$ verifies $\operatorname{Apply}(X_t,\Delta_t)=X_{t+1}$. Missing $\Delta_t$ yields $\mathsf{UNKNOWN}$; malformed operations, unsatisfied preconditions, undefined application, or a successor mismatch yield $\mathsf{FAIL}$. The implementation compares canonical state digests, relying on cryptographic binding.
Let $\mathcal F$ and $\mathcal U$ be the failed and unknown checks.
\begin{equation}
\operatorname{CCT}_{\Pi_t}=\begin{cases}
\mathsf{INVALID}&\mathcal F\ne\varnothing,\\
\mathsf{INDETERMINATE}&\mathcal F=\varnothing,\ \mathcal U\ne\varnothing,\\
\mathsf{VALID}&\mathcal F=\mathcal U=\varnothing.
\end{cases}
\label{eq:cct-aggregation}
\end{equation}
\end{definition}
A present independent contradiction dominates missing evidence, including within composite checkers. Only \textsc{valid} can proceed to the separate activation protocol (Section~\ref{sec:disc-activation}); a binary admission gate can enforce the same acceptance boundary, while the third verdict distinguishes retryable uncertainty from proved invalidity.

\subsection{Eight Invariant Classes}
\label{sec:invariants}
The predicates below define the full contract. Every checker operates on the fixed states, witness, policy projection, and applicable trusted evidence.

\paragraph{Lineage ($I_{\mathrm{lin}}$).}
Proposal identity/epoch/version must match $\Gamma$; $p_t=\operatorname{Hash}(X_t)$. The receipt must bind that scope and the predecessor, successor, and finalized-core digests with an authorized historically valid signature. Ordinary later key rotation does not invalidate earlier signatures; unknown historical status yields $\mathsf{UNKNOWN}$. Required head evidence must match the scoped proposal and successor. A contradictory digest yields $\mathsf{FAIL}$ even when exclusivity is unresolved. Head evidence concerns the transition epoch, not current runtime freshness.
\label{eq:inv-lineage}

\paragraph{Authority ($I_{\mathrm{auth}}$).}
Every operation requires scoped credentials under pre-state $N_t$:
\begin{equation}
\forall\delta\in\Delta_t:\quad
\operatorname{Authorized}(\delta,q_t,N_t,\Pi_{t,\mathrm{auth}}).
\label{eq:inv-auth}
\end{equation}
Missing credentials yield $\mathsf{UNKNOWN}$; an excluded signer or insufficient established scope yields $\mathsf{FAIL}$. Authority to invoke an amendment does not override a non-amendable invariant. Succession claims use a nonempty operation even when the cognitive payload is unchanged.

\paragraph{History ($I_{\mathrm{hist}}$).}
\begin{equation}
H_t\le_{\mathrm{hist}}H_{t+1}\ \land\
\neg\operatorname{ForbiddenHist}(H_t,\Delta_t,H_{t+1},\Pi_{t,\mathrm{hist}}).
\label{eq:inv-hist}
\end{equation}
Prior commitments, order, and timestamps must survive directly or as authorized tombstones. New events require authenticated acquisition support. A contradictory source or complete authenticated index establishes fabrication; an unavailable acquisition record only establishes uncertainty. Corrections append evidence rather than silently rewriting prior events.

\paragraph{Time ($I_{\mathrm{temp}}$).}
Let $a_{t+1}(\chi)$ be a knowledge item's claimed acquisition time, $s_{\mathrm{acq}}$ its authenticated acquisition time, $\operatorname{New}(H_t,H_{t+1})$ the newly inserted events, and $h(H)=\max_{e\in H}s(e)$ including tombstone timestamps. Require
\begin{equation}
\begin{gathered}
\forall\chi\in K_{t+1}:\quad a_{t+1}(\chi)\ge s_{\mathrm{acq}}(\chi),\\
\forall e\in\operatorname{New}(H_t,H_{t+1}):\quad s(e)\ge s_{\mathrm{acq}}(e),\\
h(H_t)\le h(H_{t+1}).
\end{gathered}
\label{eq:inv-temp}
\end{equation}
Set $h(\varnothing)=\bot$. Missing required acquisition evidence yields $\mathsf{UNKNOWN}$; affirmative backdating yields $\mathsf{FAIL}$. The per-event condition is necessary: a maximum alone cannot detect inserting an older event. The native knowledge checker covers structured entries carrying \texttt{claimed\_at}; untyped knowledge entries are outside its temporal guarantee.

\paragraph{Beliefs ($I_{\mathrm{bel}}$).}
For every new or changed complete record $B[\phi]$, including metadata-only changes, require
\begin{equation}
\operatorname{JustifiedRevision}(\phi)\land
\operatorname{HistoricalBeliefPreserved}(\phi).
\label{eq:inv-bel}
\end{equation}
These predicates depend on both states, $\Delta_t,E_t,V_t$, and $\Pi_{t,\mathrm{bel}}$. Prior origin and revision history must remain represented. The native policy requires scoped evidence matching proposition, value, and confidence, with revision epochs consistent with the applied transition. A citation's existence is insufficient; absent support yields $\mathsf{UNKNOWN}$ and contradictory support yields $\mathsf{FAIL}$. External semantic attestations may replace native checking only under their declared trust assumption.

\paragraph{Relationships ($I_{\mathrm{rel}}$).}
\begin{equation}
\begin{gathered}
\forall u\text{ with }\Delta R(u)\ne\varnothing:\\
\operatorname{RelationalSupport}(u,\Delta R(u),E_t,A_t,\Pi_{t,\mathrm{rel}}).
\end{gathered}
\label{eq:inv-rel}
\end{equation}
The native policy checks changed relationships, including count-only changes, against scoped positive interaction records. Session identifiers must be unique and append-only, counts must equal their cardinality, and elevation requires sufficient established history. Fabricating a counter cannot create evidence for a later elevation. Missing support yields $\mathsf{UNKNOWN}$; contradictory records or counts yield $\mathsf{FAIL}$. Bilateral-authorization alternatives require another policy implementation.

\paragraph{Norms ($I_{\mathrm{norm}}$).}
$\operatorname{NormativeAdmissible}$ compares the old and new governance/self-model records under $\Pi_{t,\mathrm{norm}}$, enforcing amendment rules and designated non-amendable anchors. The native artifact protects explicit non-amendable normative records; general self-model anchor policies remain part of the broader specification.
\label{eq:inv-norm}

\paragraph{Provenance ($I_{\mathrm{prov}}$).}
$\operatorname{DependencyResolved}(\delta,E_t,\Pi_{t,\mathrm{prov}})$ must hold for every policy-required dependency of each operation. Obligations derive from operation type and policy, not merely candidate-supplied reference fields. Missing required dependencies yield $\mathsf{UNKNOWN}$; resolved content contradicting a commitment yields $\mathsf{FAIL}$.
\label{eq:inv-prov}

\subsection{Verification Algorithm and Implemented Scope}
\label{sec:cct-algorithm}
Algorithm~\ref{alg:cct-verification} specifies the full contract. Partial parsing tags fields as present, absent, or malformed; a complete canonical preimage is required before hashing. Checkers evaluate independent subpredicates and retain failures even when another prerequisite is absent. Policy projections of $\bot$ remain unavailable. Unexpected runtime exceptions are implementation defects, not policy verdicts.

The Python artifact implements a structured post-resolution subset of Stages~1--5 with a supplied trusted policy and resident resolver snapshots. It accepts schema-valid typed objects and supported optional evidence omissions; general \textsc{ParsePartial} handling is specified but unimplemented. An omitted mutation manifest defaults to an empty list in the schema, so the specification's absent-manifest distinction is not exercised by this interface. It does not implement Stage~0, network authentication of resolver contents, distributed receipt issuance, or activation. The theorem's native/evaluator soundness assumptions apply to whichever implementation is deployed.

\begin{algorithm}[t]
\caption{\textsc{VerifyCognitiveContinuity}}
\label{alg:cct-verification}
\footnotesize
\begin{algorithmic}[1]
\State \textbf{Input:} $X_t, \tau_t, X_{t+1}, \Gamma, \Omega$. \textbf{Output:} $\operatorname{Verdict} \in \{\mathsf{VALID}, \mathsf{INVALID}, \mathsf{INDETERMINATE}\}$, Report $\mathcal{R}$.
\State $\mathcal{C}\gets\{\}$ \Comment{map from required check to ternary result and reason}
\Statex \textbf{// Stage 0: Deterministic Historical Policy Resolution}
\State $\langle\widehat{\Pi}_t,\mathcal{C}[C_{\mathrm{pol}}]\rangle\gets\operatorname{ResolvePolicy}(X_t.N.\mathit{policy\_ref},v,\Omega)$
\State $Z\gets\operatorname{ParsePartial}(\tau_t)$ \Comment{tagged $q_t,\Delta_t,A_t,E_t,V_t,R_t,v_\tau,b_t,p_t$}
\Statex \textbf{// Stage 1: Lineage and Cryptographic Integrity ($I_{\mathrm{lin}}$)}
\State $\mathcal{C}[I_{\mathrm{lin}}]\gets\operatorname{CheckLineagePartial}(X_t,Z,X_{t+1},\operatorname{Proj}_{\mathrm{lin}}(\widehat{\Pi}_t),\Gamma)$
\Statex \textbf{// Stage 2: Authority and Governance Verification ($I_{\mathrm{auth}}$)}
\State $\mathcal{C}[I_{\mathrm{auth}}]\gets\operatorname{CheckAuthorityPartial}(Z.q_t,X_t.N,\operatorname{Proj}_{\mathrm{auth}}(\widehat{\Pi}_t))$
\Statex \textbf{// Stage 3: Provenance Completeness ($I_{\mathrm{prov}}$)}
\State $\mathcal{C}[I_{\mathrm{prov}}]\gets\operatorname{CheckProvenancePartial}($
\Statex \hspace{\algorithmicindent}$Z.E_t,Z.\Delta_t,\operatorname{Proj}_{\mathrm{prov}}(\widehat{\Pi}_t))$
\Statex \textbf{// Stage 4: Deterministic State Application ($C_{\mathrm{apply}}$)}
\State $\mathcal{C}[C_{\mathrm{apply}}]\gets\operatorname{CheckApplyPartial}(X_t,Z.\Delta_t,X_{t+1})$
\Statex \textbf{// Stage 5: Semantic Invariant Evaluation and Attestations}
\For{each invariant $I_k \in \{I_{\mathrm{hist}}, I_{\mathrm{temp}}, I_{\mathrm{bel}}, I_{\mathrm{rel}}, I_{\mathrm{norm}}\}$}
    \State $\widehat{\Pi}_{t,k}\gets\operatorname{Proj}_k(\widehat{\Pi}_t)$
    \If{$\widehat{\Pi}_{t,k}\ne\bot\ \land\ \widehat{\Pi}_{t,k}.\text{is\_attested}$}
        \State $\mathcal{C}[I_k]\gets\operatorname{VerifyAttestationPartial}($
        \Statex $Z.V_t,I_k,Z.d_t^{\mathrm{prop}},\widehat{\Pi}_{t,k})$
    \Else
        \State $\mathcal{C}[I_k]\gets\operatorname{EvaluateInvariantPartial}($
        \Statex $X_t,Z.\tau_t^{\mathrm{core}},X_{t+1},\widehat{\Pi}_{t,k})$
    \EndIf
\EndFor
\State \Return $\operatorname{Aggregate}(\mathcal{C})$ \Comment{FAIL $>$ UNKNOWN $>$ PASS; retain all reasons}
\end{algorithmic}
\end{algorithm}

\section{Continuity Theory}
\label{sec:continuity-theory}

The following results delimit what observations can establish about a submitted transition and state the assumptions needed for verifier soundness. They concern historical transition verification; authenticating a live runtime and enforcing exclusive execution require additional mechanisms.

\subsection{Successor-State Equivalence Insufficiency}
\label{sec:state-insufficiency}

Fix a predecessor $X_t$, context $\Gamma$, and resolved policy $\Pi_t$. A \emph{submitted transition claim} is $c=\langle X_t,\tau,X'\rangle$. Write $\operatorname{Adm}(c)$ for the witness-indexed relation $X_t\overset{\tau}{\rightsquigarrow}_{\Pi_t}X'$. This labels the supplied claim, not the existential question of whether \emph{some} witness could authorize the state value $X'$. Let $\mathcal{O}_s:\mathcal{X}\to\mathcal{Y}$ observe a successor state, and define the claim observation $\mathcal{O}(c)=\mathcal{O}_s(X')$. Even full state inspection omits the claim's witness.

\begin{definition}[State-Only Continuity Test]
A state-only evaluator is a deterministic or randomized decision rule $T:\mathcal{Y}\to\Delta(\mathcal{D})$, where $\mathcal{D}=\{\mathsf{VALID},\mathsf{INVALID},\mathsf{INDETERMINATE}\}$, whose only candidate-dependent input is $\mathcal{O}(c)$.
\end{definition}

\begin{theorem}[Successor-State Equivalence Insufficiency]
\label{thm:state-insufficiency}
Suppose two submitted claims $c_+$ and $c_-$ satisfy
\begin{equation}
\mathcal{O}(c_+)=\mathcal{O}(c_-),\qquad
\operatorname{Adm}(c_+)\land\neg\operatorname{Adm}(c_-).
\label{eq:state-equiv-premise}
\end{equation}
Then every state-only evaluator has the same decision distribution on both claims. In particular, it cannot both assign zero probability of $\mathsf{VALID}$ to $c_-$ and positive probability of $\mathsf{VALID}$ to $c_+$.
\end{theorem}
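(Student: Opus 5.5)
The argument is a direct function-composition (indistinguishability) argument, and I would present it in three short steps.

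\emph{Step 1: pin down what the evaluator can depend on.} By the definition of a state-only continuity test, $T$ is a map $\mathcal{Y}\to\Delta(\mathcal{D})$. Its only candidate-dependent input is $\mathcal{O}(c)=\mathcal{O}_s(X')$. Everything else it may consult is fixed across the two claims: the predecessor $X_t$, the context $\Gamma$, the resolved policy $\Pi_t$, and any internal randomness. The decision distribution on a claim $c$ is therefore exactly $T(\mathcal{O}(c))\in\Delta(\mathcal{D})$. For a deterministic rule this is a point mass; for a randomized rule the coins are independent of $c$, so the output law is still a function of $\mathcal{O}(c)$ alone.

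\emph{Step 2: apply the premise.} From~\eqref{eq:state-equiv-premise} we have $\mathcal{O}(c_+)=\mathcal{O}(c_-)$. Applying the same function $T$ to equal arguments gives $T(\mathcal{O}(c_+))=T(\mathcal{O}(c_-))$ as distributions on $\mathcal{D}$. This proves the first assertion. Note that $\operatorname{Adm}$ plays no role in this step. The admissibility clause only serves to make the conclusion meaningful, since it guarantees the two claims \emph{ought} to be separated.

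\emph{Step 3: derive the ``in particular'' clause.} Write $P(c)$ for the probability of $\mathsf{VALID}$ under $T(\mathcal{O}(c))$. Step~2 gives $P(c_+)=P(c_-)$. So $P(c_-)=0$ forces $P(c_+)=0$, and a rule that admits $c_+$ with positive probability must also admit the inadmissible claim $c_-$ with positive probability. Equivalently, no state-only evaluator is simultaneously sound (never validating $c_-$) and non-vacuous (sometimes validating $c_+$) on this pair.

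The only real obstacle is ensuring that no hidden channel lets $T$ depend on the witness $\tau$, for instance through shared randomness correlated with $\tau$. I would handle this by stating explicitly that the evaluator's randomness and all non-candidate inputs are drawn independently of $c$, as the definition's phrase ``only candidate-dependent input'' intends. I would also remark that such pairs exist: the introduction's example gives two claims with identical successor $X'$, one carrying authorized $A_t$ and one carrying an authentic but unauthorized credential. This yields $\operatorname{Adm}(c_+)\land\neg\operatorname{Adm}(c_-)$ via clause~(3) of Definition~\ref{def:admissible-transition}, so the theorem is not vacuous.
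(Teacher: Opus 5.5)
Your proposal is correct and follows essentially the same route as the paper: equal observations fed to the same rule $T$, with internal randomness drawn independently of the hidden claim and the predecessor, context, and policy held fixed, give identical decision distributions, and taking $d=\mathsf{VALID}$ gives the ``in particular'' clause. Your realizability remark also matches the paper's appendix, which makes the example fully explicit: it uses a state-neutral \texttt{CLAIM\_SUCCESSION}, a separate authentic receipt for each claim, and a lineage policy that does not require exclusive-head evidence, so that every non-authority clause holds for both claims.
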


\begin{proof}
Both inputs equal the same observation $y$. Thus, for each $d\in\mathcal{D}$,
\begin{equation}
\Pr[T(\mathcal{O}(c_+))=d]=\Pr[T(\mathcal{O}(c_-))=d].
\end{equation}
Taking $d=\mathsf{VALID}$ gives the claim. Abstaining on both avoids false validation but does not validate $c_+$.
\end{proof}

The premise is realizable even for full-state observation. Choose a policy permitting the nonempty, state-neutral operation $\Delta=(\texttt{CLAIM\_SUCCESSION})$, with $\operatorname{Apply}(X_t,\Delta)=X_t$. Submit two otherwise well-formed claims with this same successor value: one has scoped credentials from an authorized principal, and the other has credentials from a principal affirmatively excluded by governance. Each has its own authentic receipt binding its finalized core; all non-authority predicates hold. The first claim is admissible and the second is not. Receipt issuance here certifies persistence of the submitted candidate, so it need not certify authorization. Appendix~\ref{app:proof-thm1} details the construction.

This example does not make the same state value both existentially admissible and inadmissible. Nor does it distinguish a live clone that copies the state \emph{and the complete authorized historical witness}: an offline verifier receives identical records in that case.

\begin{corollary}[Transition Evidence Requirement]
\label{cor:transition-evidence}
Under Eq.~\eqref{eq:state-equiv-premise}, an evaluator that separates the two claims must receive additional input that distinguishes them. A transition witness can supply such evidence; the theorem does not require CCT's particular witness format or establish its sufficiency.
\end{corollary}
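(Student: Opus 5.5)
The corollary is essentially the contrapositive of Theorem~\ref{thm:state-insufficiency}, so I would derive it from that theorem rather than argue afresh. First I need a precise meaning for ``separates.'' Say an evaluator $T'$ separates $c_+$ and $c_-$ when its decision distributions on the two claims differ. The natural target is the acceptance-boundary notion from the theorem: $\Pr[T'(c_-)=\mathsf{VALID}]=0<\Pr[T'(c_+)=\mathsf{VALID}]$. That notion implies the weaker one, so it suffices to work with differing distributions.

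Next I model a general evaluator as a (possibly randomized) rule $T'$ whose candidate-dependent input is $\iota(c)$. Here $\iota$ is whatever the evaluator is given about the claim. ``Additional input that distinguishes them'' then means $\iota(c_+)\ne\iota(c_-)$ beyond what $\mathcal{O}(c)$ already contains.

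The argument is a contrapositive. Suppose $\iota(c_+)=\iota(c_-)$. Then, exactly as in the proof of Theorem~\ref{thm:state-insufficiency}, $T'$ receives the same value on both claims, so its decision distributions coincide and it cannot separate them. The theorem is the special case $\iota=\mathcal{O}$. The general step repeats its one-line argument with $\mathcal{O}$ replaced by $\iota$; equivalently, $T'\circ\iota$ is a rule over a single shared observation.

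Because $\mathcal{O}(c_+)=\mathcal{O}(c_-)$ by Eq.~\eqref{eq:state-equiv-premise}, any distinguishing component of $\iota$ must lie outside the successor-state observation. In the language of claims $c=\langle X_t,\tau,X'\rangle$ with $X_t$ fixed, that component depends on $\tau$, or on trusted context correlated with $\tau$. The realizability construction after the theorem shows $\tau$ can play this role: the two authority bundles differ, and $I_{\mathrm{auth}}$ returns $\mathsf{PASS}$ on one and $\mathsf{FAIL}$ on the other. So a transition witness \emph{can} supply such evidence.

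Finally, the disclaimers need no proof. Nothing in the argument used CCT's witness format, and sufficiency is explicitly not claimed. I would note that the realizability example shows only that the witness distinguishes this one pair of claims.

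The main obstacle is formalizing ``additional input'' so the statement is not vacuous. I would fix it as an input map $\iota$ that is not a function of $\mathcal{O}(c)$ alone on $\{c_+,c_-\}$. I would also handle randomization by letting $T'$'s internal coins be independent of the claim, so they cannot act as hidden distinguishing input.
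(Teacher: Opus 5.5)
Your proposal is correct and follows the paper's own argument. The paper likewise notes that additional inputs identically distributed across $c_+$ and $c_-$ leave the theorem's argument intact (your equal-value map $\iota$ with claim-independent coins is the deterministic-input case of this), so separation needs discriminating information, and it cites the realizability construction to show a witness can supply it. One small caution: your aside that the distinguishing component must depend on $\tau$ holds only when $\mathcal{O}_s$ is injective or, as in the construction, when $X'_+=X'_-$; for a coarse $\mathcal{O}_s$, a finer view of the successor itself, or an authenticated external record as the paper mentions, could also discriminate, although the corollary never needs your stronger claim.
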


\subsection{Situated Identity Does Not Imply Continuity}
\label{sec:sit-theory}

\begin{proposition}[$\sit \centernot\implies \cct$]
\label{prop:sit-not-cct}
Consider a fixed SIT protocol $\mathcal{B}$ that admits a passing state/response-mechanism pair $(X,f)$ and does not inspect succession credentials. Under a policy permitting state-neutral \texttt{CLAIM\_SUCCESSION} by authorized principals, a claim with successor $X$ and response mechanism $f$ can satisfy
\begin{equation}
\begin{gathered}
\operatorname{SIT}_{\mathcal{B}}(X,f)=\mathsf{PASS},\\
\operatorname{CCT}_{\Pi_t}(X,\tau_-,X)=\mathsf{INVALID}.
\end{gathered}
\label{eq:sit-not-cct-prop}
\end{equation}
\end{proposition}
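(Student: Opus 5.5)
The proof is an explicit construction, reusing the realizability example after Theorem~\ref{thm:state-insufficiency}. Take the passing pair $(X,f)$ guaranteed by the hypothesis on $\mathcal{B}$, and set the predecessor $X_t:=X$. Fix the policy $\Pi_t$ so that it permits a state-neutral \texttt{CLAIM\_SUCCESSION} by authorized principals, and bind it through $N_t.\mathit{policy\_ref}$ so that $C_{\mathrm{pol}}=\mathsf{PASS}$. Let $\Delta=(\texttt{CLAIM\_SUCCESSION})$; by definition of this operation, $\operatorname{Apply}(X,\Delta)=X$. Build the witness $\tau_-$ with proposal $q_t=\langle\mathit{id}^*,t^*,\operatorname{Hash}(X),\Delta,A_-,E_t\rangle$. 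The bundle $A_-$ carries an authentic, well-formed signature from a principal that pre-state governance $N_t$ affirmatively excludes. The witness also carries a correctly scoped, historically valid persistence receipt binding $p_t$, $d_{t+1}=\operatorname{Hash}(X)$, and the finalized core digest. The resolved dependencies and validation records make all non-authority predicates pass.

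The key steps then go check by check. $I_{\mathrm{lin}}$ passes because scope, predecessor digest, successor digest and receipt all match. $C_{\mathrm{apply}}$ passes because $\operatorname{Apply}(X,\Delta)=X$. $I_{\mathrm{prov}}$ passes because \texttt{CLAIM\_SUCCESSION} has no policy-required dependencies beyond those supplied. $I_{\mathrm{hist}},I_{\mathrm{temp}},I_{\mathrm{bel}},I_{\mathrm{rel}},I_{\mathrm{norm}}$ all pass because the successor equals the predecessor. Concretely, $\le_{\mathrm{hist}}$ is reflexive, $\operatorname{New}(H,H)=\varnothing$, and $h(H)\le h(H)$. No belief record is new or changed, no $\Delta R(u)$ is nonempty, and governance is unchanged. $I_{\mathrm{auth}}$ returns $\mathsf{FAIL}$, because the signer is affirmatively excluded rather than merely missing. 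By Eq.~\eqref{eq:cct-aggregation}, $\mathcal F\ne\varnothing$, so $\operatorname{CCT}_{\Pi_t}(X,\tau_-,X)=\mathsf{INVALID}$. In fact only $\mathcal F\ne\varnothing$ is needed, since $\mathsf{FAIL}$ dominates, so the non-authority checks merely have to be well defined and need not pass.

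On the SIT side, $\mathcal{B}$ is evaluated on $(X,f)$. This is exactly the passing pair, because the successor is literally $X$ and the response mechanism is $f$, which lies outside the state tuple. Since $\mathcal{B}$ does not inspect succession credentials, its verdict does not depend on $\tau_-$, and so it is $\mathsf{PASS}$. Together these give Eq.~\eqref{eq:sit-not-cct-prop}.

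The main obstacle is making sure the authority failure is a genuine $\mathsf{FAIL}$ and not an $\mathsf{UNKNOWN}$. It must not be contaminated by the receipt, so I would note that the receipt certifies only candidate persistence and not authorization. I would also make the exclusion an established governance record in $N_t$, so that "excluded signer" holds affirmatively. A secondary point is consistency: the required nonemptiness of \texttt{CLAIM\_SUCCESSION} keeps the authority obligation non-vacuous even though the payload is unchanged. I would cite the construction in Appendix~\ref{app:proof-thm1} rather than rebuild the well-formedness details.
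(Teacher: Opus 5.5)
Your proposal is correct and follows essentially the same route as the paper. You reuse the unauthorized, state-neutral \texttt{CLAIM\_SUCCESSION} claim $c_-=\langle X,\tau_-,X\rangle$ from the Theorem~\ref{thm:state-insufficiency} construction, note that $\mathcal{B}$'s verdict on the unchanged pair $(X,f)$ is unaffected, and obtain $\mathsf{INVALID}$ from the affirmative $I_{\mathrm{auth}}=\mathsf{FAIL}$ by failure dominance. Your remark that the non-authority checks need not actually pass, because any $\mathsf{FAIL}$ dominates, is a valid minor sharpening of the paper's ``pass by construction'' step.
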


\begin{proof}
Keep both $X$ and the passing response mechanism $f$ fixed. Submit $\Delta=(\texttt{CLAIM\_SUCCESSION})$ with affirmatively unauthorized scoped credentials, matching digests, an authentic candidate-persistence receipt, and all required non-authority evidence. The behavioral inputs and mechanism remain unchanged, so SIT still passes. The nonempty operation requires authority, yielding $I_{\mathrm{auth}}=\mathsf{FAIL}$ and therefore $\mathsf{INVALID}$ by Eq.~\eqref{eq:cct-aggregation}.
\end{proof}

The passing response mechanism is an explicit premise: copying a chronicle alone does not guarantee correct retrieval, ignorance, or disclosure behavior. This separation is not an empirical SIT result or a mechanism for authenticating the runtime presenting $X$.

\subsection{Cryptographic Lineage Insufficiency}
\label{sec:lineage-insufficiency}

\begin{proposition}[Authentic Ancestry $\neq$ Admissible Continuity]
\label{prop:crypto-insufficiency}
Authentic lineage and sufficient mutation-class authority need not imply semantic admissibility:
\begin{equation}
\begin{gathered}
(I_{\mathrm{lin}}=\mathsf{PASS})\land(I_{\mathrm{auth}}=\mathsf{PASS})\\
\centernot\implies X_t\overset{\tau_t}{\rightsquigarrow}_{\Pi_t}X_{t+1}.
\end{gathered}
\end{equation}
\end{proposition}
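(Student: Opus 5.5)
The plan is to prove the non-implication with a single explicit counterexample: a claim $\langle X_t,\tau_t,X_{t+1}\rangle$ whose lineage and authority checks both return $\mathsf{PASS}$, but which violates clause (5) of Definition~\ref{def:admissible-transition}. Clause (5) requires every declared semantic predicate $\mathcal P_k$ to hold. Because admissibility is a conjunction of six clauses, breaking one clause while keeping the others defeats the relation, however strong the lineage and authority evidence is. I would break $\mathcal P_{\mathrm{bel}}$ and keep all remaining clauses satisfied, so the failure is isolated to semantics.

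The construction proceeds in the following steps.
\begin{enumerate}
\item \textbf{Policy.} Fix a resolved policy $\Pi_t$ whose authority projection grants a principal $P$ scope over belief-revision operations. Its belief projection is the native policy of $I_{\mathrm{bel}}$, which requires scoped evidence matching proposition, value, and confidence.
\item \textbf{Manifest and successor.} Let $\Delta_t$ consist of one typed operation that changes $B_t[\phi]$ to a new value and confidence. Preserve the prior origin and revision history. Set $X_{t+1}=\operatorname{Apply}(X_t,\Delta_t)$, which satisfies clause (2).
\item \textbf{Evidence.} Let $E_t$ cite a resolved source whose content commitment matches, so clause (4) and $I_{\mathrm{prov}}$ hold. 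The source's content, however, asserts a different value for $\phi$.
\item \textbf{Authority.} Sign the proposal with $P$'s scoped credentials under $N_t$, binding identity, epoch, predecessor, mutations, and provenance. Then $\operatorname{Authorized}(\delta,q_t,N_t,\Pi_{t,\mathrm{auth}})$ holds and $I_{\mathrm{auth}}=\mathsf{PASS}$.
\item \textbf{Lineage.} Set $p_t=\operatorname{Hash}(X_t)$ and match scope to $\Gamma$. Have the trusted persistence service issue an authentic, historically valid receipt binding $d_t$, $d_{t+1}$, and $d_t^{\mathrm{core}}$. This is possible because the receipt certifies candidate persistence, not admissibility. Then $I_{\mathrm{lin}}=\mathsf{PASS}$ and clauses (1) and (6) hold.
\item \textbf{Other semantic predicates.} Leave $H$, $K$, $R$, and $N$ unchanged, so the history, time, relationship, and norm predicates hold vacuously or trivially.
\end{enumerate}

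Next I would verify that $\operatorname{JustifiedRevision}(\phi)$ fails. The only cited support contradicts the new value, and under the native policy a citation's mere existence is insufficient. So $\mathcal P_{\mathrm{bel}}$ is false, clause (5) fails, and $\neg\bigl(X_t\overset{\tau_t}{\rightsquigarrow}_{\Pi_t}X_{t+1}\bigr)$ while $I_{\mathrm{lin}}=I_{\mathrm{auth}}=\mathsf{PASS}$. Operationally, $I_{\mathrm{bel}}=\mathsf{FAIL}$, so Eq.~\eqref{eq:cct-aggregation} yields $\mathsf{INVALID}$.

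The main obstacle is ensuring that the passing lineage and authority checks do not already encode the belief predicate. If the authority projection made authorization conditional on semantic justification, the counterexample would collapse. I would therefore state explicitly that authority is scoped by mutation class, as the proposition's wording ``sufficient mutation-class authority'' indicates. I would also note that $\operatorname{Authorized}$ depends only on $q_t$, $N_t$, and $\Pi_{t,\mathrm{auth}}$, not on the content of $E_t$. This independence is what lets the failure be located in clause (5) alone.
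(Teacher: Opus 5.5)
Your counterexample is correct, but it breaks a different semantic clause than the paper's proof does.

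\textbf{The paper's route.} It deletes a constraint $\psi\in N_{\mathrm{core}}$ that $\Pi_{t,\mathrm{norm}}$ marks non-amendable. The proposal is signed by a principal holding authority to submit normative amendments, so the isolated failure is $I_{\mathrm{norm}}=\mathsf{FAIL}$.

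\textbf{Your route.} You revise a belief whose only cited source resolves with a matching digest but contradicts the new value. The isolated failure is then $\operatorname{JustifiedRevision}(\phi)$, i.e.\ $I_{\mathrm{bel}}=\mathsf{FAIL}$.

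\textbf{What each buys.} The paper's failure depends only on the old and new governance records and the policy resolved from $N_t$. No evidence could rescue it, so it refutes even the existential relation $X_t\rightsquigarrow_{\Pi_t}X_{t+1}$. It also directly illustrates the proposition's wording: class-level amendment authority is not permission for every amendment. Your inadmissibility is witness-specific, since the same successor value would be admissible with matching support (as in L3). In exchange, you get a separation the paper's example does not foreground. Lineage, authority, and provenance (clause (4)) all pass, so an authentic, correctly committed citation is shown to be distinct from justification (clause (5)). Your concern about authority secretly encoding justification is well placed, and you resolve it the same way the paper does, by making authority mutation-class scoped.

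\textbf{Two bookkeeping points.} First, $\mathcal P_{\mathrm{temp}}$ quantifies over all of $K_{t+1}$, not just new items. Leaving $K$ unchanged therefore makes it hold only if $K_t=\varnothing$ or the retained items carry acquisition evidence; the paper similarly says to ``choose complete evidence'' for the untouched predicates. Second, $I_{\mathrm{lin}}$ includes any policy-required head evidence. You should either pick a lineage policy without that obligation or supply valid head evidence, as the paper's appendix does explicitly.
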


\begin{proof}
Let $\psi\in N_{\mathrm{core}}$ be non-amendable under $\Pi_{t,\mathrm{norm}}$. Submit $\Delta_t=(\operatorname{DeleteNormativeConstraint}(\psi))$ with authority to invoke normative amendments and an authentic receipt binding the correctly applied successor and finalized witness core. Choose complete evidence so all non-normative predicates hold. Lineage and class-level authority pass, while deleting $\psi$ yields $I_{\mathrm{norm}}=\mathsf{FAIL}$. Persistence and authority to submit amendments do not imply permission for every amendment.
\end{proof}

\subsection{Conditional CCT Verifier Soundness}
\label{sec:sufficiency-theory}

\begin{theorem}[Conditional CCT Verifier Soundness]
\label{thm:sufficiency}
Fix $X_t,\tau_t,X_{t+1},\Gamma,\Omega$ and a successfully resolved historical policy $\Pi_t$. Assume:
\begin{enumerate}\setlength{\itemsep}{1.5pt}\setlength{\parskip}{0pt}\setlength{\parsep}{0pt}
    \item \textbf{A1 (Cryptographic Binding and Unforgeability):} Canonical, versioned, domain-separated digests resist collisions and second preimages; scoped signatures resist chosen-message forgery.
    \item \textbf{A2 (Deterministic State Application):} The executor implements the specified deterministic mutation semantics.
    \item \textbf{A3 (Native Verification and Resolution Soundness):} Successful policy resolution establishes commitment and version compatibility, and each native $\mathsf{PASS}$ implies its declarative predicate. This includes the storage and head-evidence predicates: an authentic receipt signature alone does not prove storage behavior.
    \item \textbf{A4 (Attestation Resolution Soundness):} A policy-authorized attestation/quorum result of $\mathsf{PASS}$ for $I_k$ implies the semantic predicate $\mathcal{P}_k$, in addition to authenticating the signed record.
\end{enumerate}
Then $\operatorname{CCT}_{\Pi_t}(X_t,\tau_t,X_{t+1})=\mathsf{VALID}$ implies $X_t\overset{\tau_t}{\rightsquigarrow}_{\Pi_t}X_{t+1}$, except with negligible cryptographic probability.
\end{theorem}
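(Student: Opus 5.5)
The plan is a clause-by-clause reduction. By Eq.~\eqref{eq:cct-aggregation}, $\operatorname{CCT}_{\Pi_t}=\mathsf{VALID}$ holds exactly when $\mathcal F=\mathcal U=\varnothing$. So every check in $\mathcal C$ returned $\mathsf{PASS}$: $C_{\mathrm{pol}}$, $C_{\mathrm{apply}}$, $I_{\mathrm{lin}}$, $I_{\mathrm{auth}}$, $I_{\mathrm{prov}}$, and the five semantic checks. I would then map these to the six clauses of Definition~\ref{def:admissible-transition} and discharge each with one of A1--A4.

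\textbf{Clause-by-clause correspondence.}
\begin{itemize}
\item \emph{Policy resolution.} $C_{\mathrm{pol}}=\mathsf{PASS}$ together with A3 gives $\widehat\Pi_t=\Pi_t$, so every projection used by the later checkers is the historically bound policy. This step removes the ``successfully resolved'' hypothesis as a source of slack.
\item \emph{Clause (2).} $C_{\mathrm{apply}}=\mathsf{PASS}$ means the canonical digest of $\operatorname{Apply}(X_t,\Delta_t)$ equals that of $X_{t+1}$. A2 says the executor computes the specified $\operatorname{Apply}$. A1's collision resistance lifts digest equality to value equality $X_{t+1}=\operatorname{Apply}(X_t,\Delta_t)$.
\item \emph{Clauses (1) and (6).} $I_{\mathrm{lin}}=\mathsf{PASS}$ yields scope agreement with $\Gamma$ and $p_t=\operatorname{Hash}(X_t)$. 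It also yields a receipt whose signature verifies over $d_t$, $d_{t+1}$, $d_t^{\mathrm{core}}$ and the scope, plus matching head evidence. A1's unforgeability excludes a receipt or signature not produced by the authorized kernel key. Binding makes the signed digests commit to the actual $X_t$, $X_{t+1}$, $\tau_t^{\mathrm{core}}$. A3 supplies the remaining storage and head-evidence predicates, which a signature alone cannot establish.
\item \emph{Clause (3).} $I_{\mathrm{auth}}=\mathsf{PASS}$ with A3 gives $\operatorname{Authorized}(\delta,q_t,N_t,\Pi_{t,\mathrm{auth}})$ for all $\delta$. A1 ensures the signatures genuinely bind the ordered mutations actually applied.
\item \emph{Clause (4).} $I_{\mathrm{prov}}=\mathsf{PASS}$ with A3, and A1 for content commitments, gives dependency resolution.
\item \emph{Clause (5).} Split each semantic $I_k$ by the branch Algorithm~\ref{alg:cct-verification} took. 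On the native branch, A3 gives $\mathcal P_k$. On the attested branch, $\mathsf{PASS}$ requires a policy-authorized quorum over records bound to $d_t^{\mathrm{prop}}$. A1 ensures those records are authentic and refer to this proposal, and A4 then yields $\mathcal P_k$.
\end{itemize}

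\textbf{Probability accounting.} The failure events are digest collisions or second preimages across the state, proposal, core, and policy domains, and forgeries of proposer, evaluator, or kernel signatures. Only polynomially many hashes and signatures appear in one verification. A union bound over these events therefore keeps the total failure probability negligible. Domain separation and version binding rule out cross-domain reinterpretation of a digest. Outside this bad event, every clause holds, so $X_t\overset{\tau_t}{\rightsquigarrow}_{\Pi_t}X_{t+1}$.

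\textbf{Main obstacle.} The hardest step is making the binding chain precise: proposal digest, then the validation records that cite it, then the core digest, then the receipt. An attacker could try to mix a valid attestation for one proposal with a different $\Delta_t$, or a receipt for a different core. The plan is to argue explicitly that each $\mathsf{PASS}$ was computed over the same parsed objects. Each hand-off is then a digest equality, so any mismatch yields either a collision, which A1 excludes, or a $\mathsf{FAIL}$ from the relevant checker, which contradicts $\mathcal F=\varnothing$.
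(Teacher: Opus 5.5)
Your proposal is correct and follows essentially the same clause-by-clause argument as the paper. In both, $\mathsf{VALID}$ forces every check in $\mathcal C$ to $\mathsf{PASS}$, and A1--A4 then map each $\mathsf{PASS}$ to its clause of Definition~\ref{def:admissible-transition}: $I_{\mathrm{lin}}$ covers clauses (1) and (6), and the native/attested split covers clause (5). Your explicit union bound and binding-chain discussion just spell out what the paper compresses into conditioning on A1 and then removing that conditioning at negligible cost.
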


\begin{proof}
A $\mathsf{VALID}$ verdict requires every result in $\mathcal{C}$ to be $\mathsf{PASS}$. A1--A3 map policy resolution, lineage, authority, provenance, and application checks to their declarative clauses; A3 or A4 supplies each semantic predicate. Their conjunction is Definition~\ref{def:admissible-transition}. Appendix~\ref{app:proof-thm2} gives the clause-by-clause argument.
\end{proof}

This is a conditional composition guarantee. A3 and A4 contain the substantive checker and evaluator correctness obligations; the theorem does not prove them for the reference implementation, and fixture agreement cannot discharge them. With imperfect evaluators, a negligible cryptographic error bound is insufficient: their semantic error also contributes. No completeness, availability, live-instance exclusivity, or crash-recovery property follows from this theorem.

\subsection{Compositional Lineage Continuity}
\label{sec:compositional-theory}

\begin{definition}[Multi-Step Lineage Continuation]
\label{def:multistep-continuation}
For $n\ge1$, let $\vec{\tau}=(\tau_0,\dots,\tau_{n-1})$ describe one identity lineage. Each step has its expected epoch and context $\Gamma_k$, and uses the historical policy $\Pi_k$ successfully resolved from $N_k$. Write
\begin{equation}
X_0\overset{\tau_0}{\rightsquigarrow}_{\Pi_0}X_1\cdots
\overset{\tau_{n-1}}{\rightsquigarrow}_{\Pi_{n-1}}X_n
\label{eq:transition-chain}
\end{equation}
and define the nonempty witness-indexed closure
\begin{equation}
X_0\overset{\vec{\tau}}{\rightsquigarrow}^{*}X_n
\iff \forall k<n,\quad X_k\overset{\tau_k}{\rightsquigarrow}_{\Pi_k}X_{k+1}.
\label{eq:transitive-closure}
\end{equation}
\end{definition}

\begin{proposition}[Preservation of Transitive Invariants]
\label{prop:compositional}
Let $n\ge1$ and $X_0\overset{\vec{\tau}}{\rightsquigarrow}^{*}X_n$. If every admissible step implies $\mathcal{P}(X_k,X_{k+1})$ and $\mathcal{P}$ is a transitive binary relation, then $\mathcal{P}(X_0,X_n)$.
\end{proposition}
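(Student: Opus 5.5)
The argument is a straightforward induction on the chain length $n$, using only the definition of the witness-indexed closure in Eq.~\eqref{eq:transitive-closure} and the transitivity of $\mathcal{P}$. First, unfold the hypothesis $X_0\overset{\vec{\tau}}{\rightsquigarrow}^{*}X_n$: by Definition~\ref{def:multistep-continuation} it means that for every $k<n$ the single step $X_k\overset{\tau_k}{\rightsquigarrow}_{\Pi_k}X_{k+1}$ holds, each relative to its own resolved policy $\Pi_k$ and context $\Gamma_k$. By the per-step premise, each such admissible step yields $\mathcal{P}(X_k,X_{k+1})$. Note that $\mathcal{P}$ is a single relation fixed across all steps, so the premise must hold uniformly even though the policies $\Pi_k$ may differ. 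This uniformity is what lets the steps compose.

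We then prove, by induction on $m$ with $1\le m\le n$, the claim $\mathcal{P}(X_0,X_m)$. For the base case $m=1$, the claim is exactly the per-step consequence for $k=0$; here the hypothesis $n\ge1$ guarantees that this step exists. For the inductive step, assume $\mathcal{P}(X_0,X_m)$ with $m<n$. The step $k=m$ gives $\mathcal{P}(X_m,X_{m+1})$, and transitivity of $\mathcal{P}$ yields $\mathcal{P}(X_0,X_{m+1})$. Taking $m=n$ concludes the proof. Only transitivity is used; reflexivity is never needed, which is precisely why the closure is defined as nonempty.

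There is no genuine obstacle, since the content is purely order-theoretic. The only points requiring care are bookkeeping. First, the intermediate states $X_k$ must be the same objects in consecutive steps: $X_{k+1}$ is the successor of step $k$ and the predecessor of step $k+1$, which is how Eq.~\eqref{eq:transition-chain} is indexed. Second, we should remark that the result says nothing about invariants that fail to be transitive, or about steps whose admissibility rests on the conditional assumptions of Theorem~\ref{thm:sufficiency}. If one instead starts from \textsc{valid} verdicts rather than from the declarative relation, the conclusion inherits those assumptions, together with a union-bound accumulation of negligible cryptographic error over the $n$ steps.
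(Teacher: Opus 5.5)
Your proof is correct and matches the paper's argument: induction on the nonempty chain length, where the single-step premise gives the base case and transitivity composes the prefix result with the final step, with no reflexivity needed. Inducting on prefixes $m\le n$ of a fixed chain rather than on $n$ itself is only a cosmetic difference.
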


Induction on $n$ proves the proposition (Appendix~\ref{app:proof-prop3}). Consequences include a receipt ancestry path from $d_0$ to $d_n$, event-time maxima $h(H_0)\le\cdots\le h(H_n)$, and historical preservation $H_0\le_{\mathrm{hist}}H_n$. For fixed anchors $\mathcal{A}_{\mathrm{core}}$, if every resolved policy protects them and each step preserves them, their inclusion in $N_{\mathrm{core}}\cup S_{\mathrm{core}}$ persists throughout the lineage. This assumes protection under every policy; it does not prove resistance to gradual policy weakening. Non-transitive semantic bounds, branch reconciliation, and long-horizon evaluator error require separate analysis.

\section{IdentityLineageBench}
\label{sec:lineagebench}
\label{sec:bench-arch}

\lineagebench generates synthetic transition records
\begin{equation}
\mathcal{T}_k=\langle X_t,\tau_t,X_{t+1},y_k^*\rangle,
\label{eq:benchmark-instance}
\end{equation}
with labels derived from the generator, frozen policy $\Pi$, and verification context $\Gamma$. Labels express specification conformance, not external truth or observed agent behavior.

\paragraph{Taxonomy.}
\label{sec:bench-taxonomy}
The 24 families cover ten legitimate changes (L1--L10), ten invalid mutations (I1--I10), and four evidence gaps (D1--D4). Legitimate cases include learning, episodic formation, belief revision, trust growth, consolidation, forgetting, correction, runtime upgrades, recovery records, and substrate rebinding. Invalid cases exercise fabrication, retrospective rewriting, forbidden amendments, trust injection, suppression, replay mismatch, unauthorized succession, rollback, receipt forgery, and corrupted provenance. Evidence gaps concern required attestations, head evidence, citations, and evaluator agreement. Appendix~\ref{app:transition-taxonomy} specifies each family.

L9 represents a forward recovery record after authenticated replay reconstructs the accepted suffix; checkpoint recovery and suffix replay are not executed. L10 changes a synthetic model identifier under a migration manifest; no foundation model is run or migrated.

\paragraph{Generation.}
\label{sec:bench-generation}
Source lifecycles advance through authorized episodic appends (L2); other families branch from these source states. The benchmark therefore exercises multiple epochs but not long trajectories combining all mutation families. The native belief fixtures require support scoped to identity, epoch, proposition, value, and confidence. Relational fixtures use unique authenticated session records and derive interaction counts from them, including changes to counts without trust elevation. Missing required evidence yields $\mathsf{UNKNOWN}$; an established contradiction yields $\mathsf{FAIL}$ and dominates concurrent evidence gaps.

\section{Evaluation Methodology}
\label{sec:experimental-methodology}
\label{sec:baseline-arms}

The evaluation defines six arms:
\begin{description}\setlength{\itemsep}{0pt}
\item[A: State similarity.] Thresholded cosine similarity of a deterministic lexical token-count projection of canonical state JSON. This is not a semantic encoder.
\item[B: Memory overlap.] Thresholded $|M_t\cap M_{t+1}|/|M_t|$ for nonempty predecessor memory, using exact canonical items.
\item[C: SIT.] A proposed situated-response comparison~\cite{sit2026}. No frozen model, probe, candidate, or judge configuration was supplied; this arm is unexecuted.
\item[D: Lineage only.] Ternary verification of predecessor, scope, receipt, historical credentials, and required head evidence. It omits mutation authority and semantic checks.
\item[E: CCT.] The post-resolution checks of Algorithm~\ref{alg:cct-verification}. The benchmark supplies the frozen resolved policy directly, fixing $C_{\mathrm{pol}}=\mathsf{PASS}$; resolver failures are unmeasured.
\item[F: Fixture oracle.] Generator labels, used to check scoring rather than as an independent comparator.
\end{description}

Arms A, B, and D are restricted diagnostic proxies. Their errors identify information omitted by those configurations; they do not establish superiority over a capable policy-aware system. A and B emit binary predictions and cannot abstain; D can abstain only within its lineage scope.

\paragraph{Calibration and questions.}
\label{sec:rqs}
Development and test identities, states, histories, memories, and derived fixtures are disjoint. Thresholds maximize balanced accuracy on development canonical valid/invalid fixtures; indeterminate fixtures are excluded from calibration. Held-out identities share the same generator and mutation families. The measured questions concern fixture conformance, evidence gaps, isolated checker contribution, and local verification cost; behavioral continuity remains unmeasured.

\paragraph{Metrics.}
\label{sec:metrics}
Write $V$, $I$, and $U$ for valid, invalid, and indeterminate verdicts. We report
\begin{align*}
\mathrm{VAR}&=P(\widehat y=V\mid y^*=V),\\
\mathrm{IAR}&=P(\widehat y=V\mid y^*=I),\\
\mathrm{IDR}&=P(\widehat y=I\mid y^*=I),\\
\mathrm{IDR}_{\mathrm{indet}}&=P(\widehat y=U\mid y^*=U).
\end{align*}
Valid nonacceptance $\mathrm{VNAR}=1-\mathrm{VAR}$ is partitioned into rejection (VRR) and abstention (VIR). Family and checker rates accompany aggregate scores. Section~\ref{sec:verifier-overhead} measures local verifier latency and serialized witness size with parsed resident inputs, excluding storage-service, network, evaluator, and model costs.

\section{Fault Isolation and Execution}
\label{sec:evaluation-protocol}
\label{sec:adversarial-attacks}
\label{sec:empirical-hypotheses}

Canonical I1 changes autobiographical content while retaining a nonbackdated timestamp and contradictory authenticated acquisition record, isolating history failure. I7 supplies affirmatively unauthorized succession credentials with valid lineage; copying a complete authorized record is not the tested attack. I8 intentionally combines stale lineage and temporal regression. Canonical families, isolated faults, and compound variants are reported separately.

\paragraph{Execution.}
\label{sec:protocol-execution}
Each available arm evaluates the same frozen records independently. Checker ablations remove one condition on fixtures where only that condition is decisive; compound failures cannot attribute marginal contribution. Classification uses generator labels and reports fixture-bootstrap 95\% intervals, which characterize the sampled fixtures rather than generalization across deployments. Source/configuration digests and prediction records support reproduction. Section~\ref{sec:comparative-results} reports the measurements.

\section{Reference Verifier Overhead Characterization}
\label{sec:verifier-overhead}
\subsection{Workloads and Measurement Protocol}
\label{sec:overhead-method}
We measure the same reference implementation used in Section~\ref{sec:comparative-results}. The study covers 60 deterministic signed workloads, with 10 discarded warmups and 300 measured repetitions per case and component across five shuffled blocks. All 18,000 full verification calls returned \textsc{valid}. This valid-path microbenchmark measures local verification costs; attack-detection accuracy and production throughput are separate questions.

The unique-workload accounting is
\begin{equation}
30_{\mathrm{grid}}+4_{\mathrm{mut}}+12_{\mathrm{state}}+4_{\mathrm{combined}}+5_{\mathrm{sem}}+5_{\mathrm{single}}=60.
\label{eq:workload-accounting}
\end{equation}
The $5\times6$ grid varies dependencies $d\in\{0,4,16,64,256\}$ and attestations $a\in\{0,1,3,5,10,32\}$. Additional cases vary mutations $m$ up to 256, retained history/belief/relationship counts up to 1,024, combined witness counts, and native versus attested predicates. The default has $m=1$, $d=a=0$, and 16 records per retained component. Mutations are ordered runtime-configuration writes; dependencies resolve to distinct resident 128-byte payloads. Each call verifies $2+a$ real Ed25519 signatures. Native semantic scans cover retained, unchanged records; these workloads do not characterize every belief or relationship update path.

One sequential CPython 3.12.14 process ran on an ARM64 macOS 26.5.2 host with ten logical CPUs, cryptography 50.0.1/OpenSSL 4.0.2, Pydantic 2.13.5, and RFC~8785 0.1.4. Library thread limits were one; garbage collection remained enabled. No samples or timer overhead were removed. The host was not CPU-isolated or frequency-pinned; start/end one-minute load averages were 3.37/2.39. The 300-sample p99 reflects about three upper-tail observations and is descriptive.

Timing covers diagnostic Stages~1--5 on parsed resident objects with a supplied policy, including repeated hashes, signature instrumentation, and report construction. It excludes policy/evidence retrieval, signing, external evaluator execution, parsing, serialization, persistence, and activation. Component timers overlap and must not be summed.

\begin{figure*}[t]
\centering
\includegraphics[width=\textwidth]{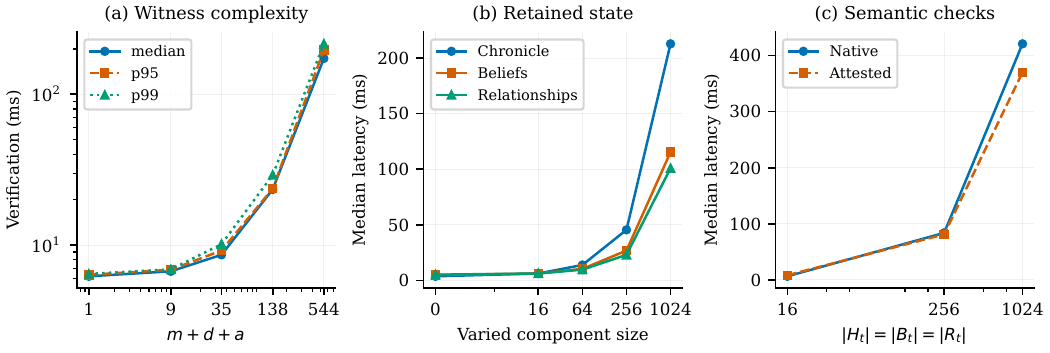}
\caption{Measured verifier scaling. (a)~Combined $(m,d,a)$ sweep; the sum is a count index, not an equal-cost model. (b)~Independent retained-state sweeps. (c)~Native checks versus ten authenticated validation records. Attestation production and communication are excluded.}
\label{fig:verifier-scaling}
\end{figure*}

\subsection{Latency and Witness Scaling}
\label{sec:overhead-results}
The default workload takes 6.21\,ms median (6.35\,ms p95; 6.44\,ms p99). At $(m,d,a)=(256,256,32)$, the 97,569-byte witness takes 172.92\,ms median (196.56\,ms p95; 216.12\,ms p99; Table~\ref{tab:verifier-complexity}). The isolated 256-dependency and 32-attestation endpoints take 14.79 and 13.22\,ms median.

\begin{table}[t]
\centering
\caption{Combined valid-witness sweep. Latencies are milliseconds over 300 repetitions.}
\label{tab:verifier-complexity}
\small\setlength{\tabcolsep}{3pt}

\begin{tabular}{rrrrrrr}
\toprule
$m$ & $d$ & $a$ & Bytes & Median & p95 & p99 \\
\midrule
1 & 0 & 0 & 1,407 & 6.21 & 6.35 & 6.44 \\
4 & 4 & 1 & 2,963 & 6.70 & 6.86 & 6.90 \\
16 & 16 & 3 & 7,687 & 8.62 & 9.20 & 10.14 \\
64 & 64 & 10 & 26,165 & 23.40 & 23.66 & 29.17 \\
256 & 256 & 32 & 97,569 & 172.92 & 196.56 & 216.12 \\
\bottomrule
\end{tabular}

\end{table}

At one mutation, the witness grows from 1,407 bytes to 44,669 bytes with 256 dependencies, 14,782 bytes with 32 attestations, and 58,044 bytes with both (Figure~\ref{fig:witness-size}). Bytes include the finalized core and receipt, but exclude states, trusted context, and resolved evidence.

\begin{figure}[t]
\centering
\includegraphics[width=\columnwidth]{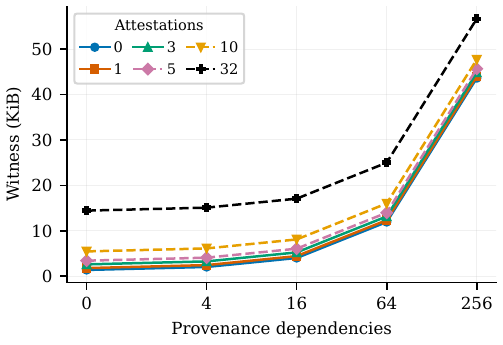}
\caption{Canonical witness bytes versus dependencies and attestations, with one mutation. State snapshots, trusted context, and resolved evidence are excluded.}
\label{fig:witness-size}
\end{figure}

With 1,024 records in each retained component, native verification takes 420.97\,ms median (439.10\,ms p95; 453.76\,ms p99). Replacing all five semantic checks with ten already-produced attestations takes 369.99\,ms median. This shifts work to external evaluators and changes trust assumptions; it does not establish an end-to-end saving.

\subsection{Component Costs and Reproduction}
\label{sec:overhead-components}
At the largest combined witness, $I_{\mathrm{bel}}$ attestation resolution takes 150.41\,ms median while all 34 signature primitives together take 5.17\,ms (Table~\ref{tab:verifier-components}). The resolver recomputes proposal bindings for each attestation, so repeated canonical hashing dominates this path. Retained-state hashing and copying remain necessary under both semantic policies. These measurements expose costs of the diagnostic implementation; deployment acceptability depends on transition frequency, state sizes, and the excluded admission costs.

\begin{table}[t]
\centering
\caption{Median component latency (ms). Largest: $(m,d,a)=(256,256,32)$ at 16 retained records per component. Timings overlap.}
\label{tab:verifier-components}
\scriptsize\setlength{\tabcolsep}{3pt}

\begin{tabular}{lrr}
\toprule
Component & Baseline & Largest \\
\midrule
State hash ($X_t$) & 0.635 & 0.648 \\
Proposal digest & 0.041 & 4.529 \\
Core digest & 0.045 & 5.050 \\
Receipt signature check & 0.165 & 0.168 \\
Ed25519 primitives (sum) & 0.300 & 5.174 \\
Lineage (Stage 1) & 2.157 & 11.726 \\
Authority & 0.190 & 5.220 \\
Provenance & 0.002 & 1.205 \\
State application & 0.842 & 0.920 \\
$I_{\mathrm{hist}}$ (native) & 0.351 & 0.356 \\
$I_{\mathrm{temp}}$ (native) & 0.007 & 0.009 \\
$I_{\mathrm{bel}}$ (native) & 0.044 & 0.052 \\
$I_{\mathrm{rel}}$ (native) & 0.020 & 0.021 \\
$I_{\mathrm{norm}}$ (native) & 0.025 & 0.028 \\
$I_{\mathrm{bel}}$ (attestation) & -- & 150.410 \\
\bottomrule
\end{tabular}

\end{table}

The artifact includes all raw timings and signed workloads. Its manifest freezes the unchanged source, dependencies, schedule, environment, and measurement boundaries. A checksum-verifying script regenerates both figures and tables without rerunning the verifier.

\section{Classification and Conformance Results}
\label{sec:comparative-results}
\subsection{Reference Verifier and Diagnostic Proxies}
The generated corpus contains 1,728 fixtures: 576 development fixtures from four identities and 1,152 held-out fixtures from eight disjoint identities, each with three source epochs. The held-out set comprises 576 canonical fixtures (240 valid, 240 invalid, 96 indeterminate), 480 isolated faults, and 96 compound faults. The split audit found no shared identity identifiers or serialized state/history/memory records.

Arm A uses a 256-dimensional signed lexical token-count projection of canonical state JSON, with development-calibrated cosine threshold 0.99994358339178935. Arm B uses exact memory-set overlap with threshold zero. Both thresholds maximize balanced accuracy on development valid/invalid canonical fixtures. SIT (Arm C) remains unexecuted. Arm D checks lineage only. Arm E executes the reference post-resolution verifier; the generator-label oracle is excluded from comparative interpretation.

\begin{table*}[t]
\centering
\caption{Canonical held-out classification (576 fixtures per executed arm). Rates are percentages with 95\% fixture-bootstrap intervals. A, B, and D are diagnostic proxies with restricted information/check scope; C is unexecuted.}
\label{tab:comparative-results}
\small
\setlength{\tabcolsep}{5pt}
\begin{tabular}{lcccc}
\toprule
Arm & VAR & IAR & IDR & $\mathrm{IDR}_{\mathrm{indet}}$ \\
\midrule
A & 78.8 [73.1,84.1] & 60.0 [53.7,66.0] & 40.0 [34.0,46.3] & 0.0 [0.0,0.0] \\
B & 100.0 [100.0,100.0] & 100.0 [100.0,100.0] & 0.0 [0.0,0.0] & 0.0 [0.0,0.0] \\
C & \multicolumn{4}{c}{Unavailable: no frozen SIT model/probe configuration} \\
D & 100.0 [100.0,100.0] & 80.0 [74.7,84.8] & 20.0 [15.2,25.3] & 25.0 [17.0,33.3] \\
E & 100.0 [100.0,100.0] & 0.0 [0.0,0.0] & 100.0 [100.0,100.0] & 100.0 [100.0,100.0] \\
\bottomrule
\end{tabular}

\end{table*}

Arm A rejects 51/240 legitimate transitions and accepts 144/240 invalid ones. Arm B accepts every canonical fixture. Arm D rejects I8/I9 and abstains on D2, but accepts the eight attack families outside its check scope. These results illustrate the consequences of restricted observations and predicates; they do not establish superiority over a policy-aware system with the same evidence. The intervals use 1,000 fixture-bootstrap replicates (seed 20260905). Dependence within identities limits interpretation to this generated fixture population; degenerate intervals at zero/one are not population error bounds.

\subsection{Conformance and Adversarial Regressions}
\label{sec:verifier-conformance}
The reference verifier matches all 1,152 held-out labels, including all 576 canonical cases, with totals of 240 valid, 672 invalid, and 240 indeterminate. Isolated-fault auditing reports zero isolation violations. Checker ablations on the 480 isolated fixtures are preserved per fixture in the artifact; their counts reflect the deliberately constructed fault prevalence, not an independently sampled deployment distribution. Historical policy resolution is supplied as a trusted prerequisite throughout.

Additional signed regressions exercise evidence boundaries beyond the canonical taxonomy. A fabricated interaction count followed by trust elevation is rejected. New beliefs and metadata-only revisions without required evidence are indeterminate; irrelevant evidence is rejected. Correctly scoped supporting records remain accepted. Contradictory head digests are rejected even with unresolved exclusivity. The full suite passes 346 tests with one optional SIT test skipped; 28 parameterized cases cover these evidence boundaries and valid controls. These tests are bounded regression evidence, not a proof of semantic soundness.

\paragraph{Reproduction.}
Run \texttt{make benchmark} and \texttt{make ci} from the bundled \path{code/} directory. The source-digest prefix is \texttt{7ddb87371e95} and fixture-hash prefix \texttt{ff5a3d6262eb}; complete SHA-256 values and per-fixture results are included. No real-agent, policy-aware comparative, or distributed-deployment experiment is claimed.

\section{Deployment Contract}
\label{sec:discussion}
\subsection{Persistence and Authoritative Activation}
\label{sec:disc-activation}
The verifier is an offline component. The following integration contract describes what a deployment would additionally need; the artifact does not implement a distributed activation service.
\begin{enumerate}
\item Persist a proposed successor and finalized witness as a \emph{candidate}, without changing the authoritative head. Issue the candidate-persistence receipt used by CCT.
\item Resolve the predecessor's committed policy and verify the completed record. \textsc{Invalid} candidates remain rejected audit records; \textsc{indeterminate} candidates await evidence or resolution.
\item For a \textsc{valid} candidate, atomically compare the current head against the authenticated expected predecessor, revalidate current authority and freshness, publish the successor, and advance a fencing generation. A conflicting predecessor requires a new proposal or explicit branch policy.
\item Permit actions only through a runtime authenticated for that identity and current fencing generation. Retrying activation must be idempotent for the same proposal and must not activate a competing successor under exclusive-head policy.
\end{enumerate}
A crash before activation leaves a persisted candidate; a crash afterward must recover the recorded activation decision. Consensus, durable atomicity, fencing enforcement, and their availability tradeoffs belong to the deployment. Historical CCT validity remains meaningful after newer heads exist and therefore cannot substitute for a current-head check.

\subsection{Recovery, Forgetting, and Migration}
\label{sec:disc-immutability}\label{sec:disc-forgetting}\label{sec:disc-migration}
Recovery from checkpoint $X_j$ must replay the authenticated committed suffix to the last activated state $X_t$ before resuming authority. Restoring $X_j$ alone and issuing a fresh receipt does not preserve events committed between $j$ and $t$. L9 models the authenticated recovery record after such restoration; it does not execute a crash/replay protocol. I8 instead presents stale content as the uninterrupted successor.

Policy-permitted forgetting changes derived memory $M$ while preserving $H\le_{\mathrm{hist}}H'$ or authorized tombstones. This does not establish that a generated summary preserves all task-relevant meaning. Similarly, migration can bind source/target model identifiers and authorization, but unchanged explicit state does not imply unchanged response behavior. Those properties require downstream behavioral evaluation.

\subsection{Branches and Semantic Authority}
\label{sec:disc-forks}
A branching policy may permit multiple successors; an exclusive-head policy must obtain the corresponding trusted evidence. Creation, reconciliation, and semantic merges remain external protocols. Authentic poisoning input may satisfy acquisition predicates while remaining harmful. An action policy must preserve input trust distinctions and constrain how acquired content becomes operational authority. CCT's record integrity guarantee alone does not establish this end-to-end property.

\section{Limitations and Required Validation}
\label{sec:limitations}
\paragraph{Specification and implementation.}
Soundness is conditional on the concrete checkers, resolvers, and evaluators satisfying their contracts. The Python artifact implements a restricted structured policy with native evidence checks; it does not implement historical policy binding, every possible self-model anchor policy, live authentication, or distributed activation. Regression tests provide bounded implementation evidence, not a proof of complete semantic coverage.

\paragraph{Semantic validity.}
Policy conformance does not establish external truth, moral adequacy, or subjective identity. In the native benchmark, typed evidence explicitly supports proposition/value/confidence records and scoped interactions. Extraction of these records from natural language is not evaluated. External evaluator quorums require an independently justified correctness model; signatures and vote counts alone do not supply it.

\paragraph{Evaluation reach.}
Generated fixtures and short source histories do not establish generalization to deployed agents. The behavioral SIT arm is unexecuted; diagnostic proxy baselines do not represent a capable policy-aware alternative. Required follow-up includes independently labeled agent transitions, semantic/behavioral utility after consolidation and migration, sequential adversarial trajectories, and an evidence-equivalent authenticated-log/authorization/replay baseline. Identity-level uncertainty and family holdouts are needed for broader statistical claims.

\paragraph{Systems costs and failures.}
The valid-path microbenchmark measures the reference verifier on resident synthetic inputs in one machine session; it does not cover all mutation families or failure paths. Total admission cost, external evaluation, remote policy/evidence resolution, durable storage, concurrency, partitions, and crash recovery remain unmeasured. Deployment claims require an implemented activation contract and fault-injection evaluation.

\section{Related Work}
\label{sec:related-work}

\subsection{Evidence-Based Policy Verification}
\label{sec:rel-provenance}

Authenticated evidence checked against an application policy is an established design. Proof-carrying authorization~\cite{bauer2001pca} separates the construction of authorization proofs from their verification and supports distributed policy and delegation. The in-toto framework~\cite{torresarias2019intoto} binds signed execution records to authorized actors, artifact dependencies, and threshold requirements; its verifier also runs application-specific inspections. Consequently, neither signed witnesses nor the addition of semantic validators alone distinguishes CCT from these systems.

CCT specializes this pattern to records of evolving agent state: its contract identifies belief, history, relationship, and substrate-transition obligations and separates insufficient evidence from an established violation. A policy-aware log with authorization, replay, and equivalent application validators is therefore a necessary comparison. The present component baselines isolate omitted checks; they do not establish an advantage over such a system or an in-toto encoding of the same policy.

\subsection{State Continuity and Durable Histories}
\label{sec:rel-migration}

State continuity already has concrete security formulations. Memoir~\cite{parno2011memoir} combines deterministic execution with a protected request-history summary and constrained replay to provide rollback resistance and crash recovery. SUNDR~\cite{li2004sundr} provides fork consistency for untrusted storage: clients can detect inconsistent histories when their subsequent operations expose the divergence. These guarantees depend on explicit storage, execution, and communication assumptions; a signed historical snapshot alone provides neither freshness nor exclusive runtime ownership.

Event sourcing~\cite{fowler2005event}, provenance-aware storage~\cite{muniswamy2006provenance}, transparency logs~\cite{laurie2014certificate}, and state-machine replication~\cite{schneider1990implementing} supply reusable history and ordering mechanisms. Application invariants can be enforced within these systems. CCT specifies additional verification obligations for the represented agent state, while relying on the deployment for authoritative activation, fencing, and recovery. Similarly, virtual-machine migration~\cite{clark2005live} preserves execution state; CCT's substrate manifest records the declared correspondence between agent representations without proving equivalent behavior across models.

\subsection{Agent Memory and Action Authority}
\label{sec:rel-agent-memory}

Generative Agents~\cite{park2023generative} and MemGPT~\cite{packer2023memgpt} organize persistent experience, retrieval, and reflection. LongMemEval~\cite{wu2025longmemeval} measures memory use across conversations. These works address agent capability and retrieval quality; they do not claim that those measurements authenticate a successor transition. Memory poisoning~\cite{sleeper2026poisoning} motivates evaluating the authority of information that survives across sessions.

Two recent preprints address closely related enforcement problems. MemLineage~\cite{ouyang2026memlineage} combines signed memory entries with derivation lineage and gates sensitive actions; its propagation guarantee depends on retaining sufficiently weighted attribution edges. TMA-NM~\cite{louck2026tmanm} binds authority to authenticated origins at write time and controls its elevation through independent corroboration, conditional on correct origin labeling. These are direct comparisons for CCT's provenance and trust policies. CCT evaluates transition records across several state components; it has not demonstrated that those checks prevent downstream action laundering. A deployment must retain trustworthy acquisition records and enforce action authority after retrieval. No head-to-head evaluation with either system is reported here.

\subsection{Relation to the Continuity Kernel and Situatedness}
\label{sec:rel-pci}
\label{sec:rel-philosophy}

CCT builds on the versioned state and governed lineage concepts of Persistent Cognitive Identity~\cite{pci2026} and the Continuity Kernel (CK)~\cite{ck2026}. CK defines the activation boundary: a successful Commit atomically installs state, authority, lineage, effects, outcome, and receipt after rechecking the expected head and current authority. CCT's narrower contribution is a record-verification contract, an explicit decomposition of identity-transition predicates, and a controlled conformance benchmark. Typed changes, pre-state authority, pinned evidence, and non-circular receipt construction are inherited mechanisms.

The receipt contracts must remain distinct. CCT's persistence receipt binds a retained candidate and its witness; it does not assert that the candidate became an authoritative head. CK's Commit receipt concerns successful activation of the complete accepted unit. A CCT verdict could be supplied as evidence to CK, but activation must still revalidate the current head, ownership, and freshness. The present artifact does not implement or evaluate that integration.

The Situated Identity Test~\cite{sit2026,sitbench2026dataset} evaluates point-in-time situated responses, whereas CCT evaluates evidence for a specified state transition. Neither test alone authenticates a unique live runtime. Philosophical accounts of psychological continuity~\cite{parfit1984reasons} motivate the question of persistence, but CCT's policy-relative record relation makes no claim about numerical personal identity.

\section{Conclusion}
\label{sec:conclusion}
CCT specifies how to evaluate a submitted agent-state transition using scoped authority, provenance, deterministic application, semantic predicates, and candidate-persistence evidence. State observations cannot distinguish claims whose evidence differs, and authentic lineage does not establish every application predicate. The verifier's soundness remains conditional on its declared trust and checker assumptions.

The executable artifact and synthetic benchmark provide reproducible conformance evidence. Adversarial regression cases additionally test unsupported relational and belief changes that the original taxonomy did not expose. Establishing practical value for persistent AI agents requires independent semantic evaluation, capable policy-aware comparisons, and a deployment that enforces activation and recovery separately from historical verification.

\noindent\textbf{Artifact Availability.} The reference verifier implementation, test suite, generated fixtures, per-fixture decisions, and execution manifests are available at \url{https://github.com/openkedge/cctbench} (with an identical copy in the bundled \path{code/} directory). From that directory, \texttt{make benchmark} regenerates the classification run and \texttt{make ci} executes the conformance suite. The paper's \path{data/comparative-evaluation/} snapshot identifies the measured source and fixture hashes. The paired overhead snapshot is described in Section~\ref{sec:verifier-overhead}.

\smallskip
\noindent\textbf{AI-Use Disclosure.} OpenAI Codex and Google Antigravity assisted with LaTeX formatting, draft structuring, language editing, implementation and test development, notation and schema consistency checks, analysis scripting, and figure preparation. The authors remain responsible for the study, code, and reported results. All reported measurements come from executed code and frozen artifacts rather than model-generated estimates.

\bibliographystyle{unsrt}
\bibliography{refs}

\appendix
\section{Formal Proofs}
\label{app:formal-proofs}

\subsection{Theorem~\ref{thm:state-insufficiency}: Equal Observations of Distinct Claims}
\label{app:proof-thm1}

\begin{proof}
Fix the predecessor, policy, and context used by both claims. Let $y=\mathcal{O}(c_+)=\mathcal{O}(c_-)$. For a randomized evaluator, let $U$ be its internal random seed, drawn independently of the hidden claim. For every decision $d$,
\begin{equation}
\begin{aligned}
\Pr[T(\mathcal{O}(c_+);U)=d]
&=\Pr[T(y;U)=d]\\
&=\Pr[T(\mathcal{O}(c_-);U)=d].
\end{aligned}
\end{equation}
A deterministic evaluator is the degenerate case. Since $\operatorname{Adm}(c_+)$ holds and $\operatorname{Adm}(c_-)$ does not, their common acceptance probability is both the valid-claim acceptance probability and the invalid-claim false-validation probability. Zero false validation therefore forces zero acceptance of the valid claim. An identical abstention distribution cannot distinguish them either.

To realize the premise, choose a policy with a state-neutral operation
\begin{equation}
\Delta=(\texttt{CLAIM\_SUCCESSION}),\qquad
\operatorname{Apply}(X,\Delta)=X.
\end{equation}
The manifest contains one operation, so the authority predicate is not vacuous. Choose an authorized principal $a_+$ and a principal $a_-$ whose exclusion is established by the same pre-state governance. Form scoped proposals $q_+$ and $q_-$ whose only relevant difference is their respective signed authority bundles. Each signature is authentic and binds its own proposal intent; only $a_+$ has authority for the operation. The operation needs no new semantic acquisition evidence, and choose a policy whose native semantic checks accept unchanged state. Use a lineage policy that does not require exclusive-head evidence, avoiding any assumption that two exclusive activations both occurred.

Finalize each proposal's core and obtain a separate authentic persistence receipt binding that core, the fixed identity and epoch, and predecessor and successor digest $d=\operatorname{Hash}(X)$. Such receipts are consistent with their specified semantics: they certify persisted candidate records, not their activation or admissibility. Set
\begin{equation}
c_+=\langle X,\tau_+,X\rangle,\qquad
c_-=\langle X,\tau_-,X\rangle.
\end{equation}
The six admissibility clauses hold for $c_+$; authority fails for $c_-$. Their observations coincide for every successor-state function $\mathcal{O}_s$, including identity observation. The witness-free existential relation has the same truth value for both occurrences of $X$, since $\tau_+$ exists. No claim to the contrary is needed.
\end{proof}

Corollary~\ref{cor:transition-evidence} follows directly: additional inputs that remain identically distributed between the claims leave the same argument intact. Separation therefore needs discriminating information, which may be represented by witnesses, an authenticated external record, or another mechanism. The construction proves neither necessity of CCT's encoding nor the ability to identify a live process from copied historical records.

\subsection{Proposition~\ref{prop:sit-not-cct}: Behavioral Passing and Unauthorized Claims}
\label{app:proof-prop1}

\begin{proof}
Fix a SIT protocol $\mathcal{B}$ and a passing pair $(X,f)$ as in the proposition. The reference history, query suite, and disclosure rules are fixed parts of $\mathcal{B}$. The response mechanism $f$ supplies the accepted retrieval, ignorance, and disclosure behavior; this is an assumption about the pair, not a consequence of storing $H$.

Use the unauthorized claim $c_-=\langle X,\tau_-,X\rangle$ constructed above, with $\Delta=(\texttt{CLAIM\_SUCCESSION})$. Keep $f$ unchanged. Because $\mathcal{B}$ does not inspect succession credentials, its response transcript and verdict are unchanged: $\operatorname{SIT}_{\mathcal{B}}(X,f)=\mathsf{PASS}$. The operation's correctly scoped signature comes from a principal known to be unauthorized under $N$, so $I_{\mathrm{auth}}=\mathsf{FAIL}$. The application, receipt, lineage, provenance, and semantic checks pass by construction. Failure dominance gives $\operatorname{CCT}_{\Pi_t}(X,\tau_-,X)=\mathsf{INVALID}$.

If the authority evidence were merely unavailable, this construction would instead yield $\mathsf{INDETERMINATE}$ in the absence of another failure. If SIT itself required succession authorization, the proposition's independence premise would not apply. Neither case changes the stated conditional separation.
\end{proof}

\subsection{Proposition~\ref{prop:crypto-insufficiency}: Signed Forbidden Mutation}
\label{app:proof-prop2}

\begin{proof}
Choose an authentic predecessor $X_t$ with a non-amendable constraint $\psi\in N_{\mathrm{core}}$. Let $\Pi_{t,\mathrm{auth}}$ authorize principal $a$ to submit normative amendments, while $\Pi_{t,\mathrm{norm}}$ forbids deleting $\psi$. This distinguishes class-level submission authority from the semantic amendment contract. Construct
\begin{equation}
\begin{aligned}
\Delta_t&=(\operatorname{DeleteNormativeConstraint}(\psi)),\\
X_{t+1}&=\operatorname{Apply}(X_t,\Delta_t).
\end{aligned}
\end{equation}
Use matching identity, epoch, protocol version, and predecessor digest; $a$ signs the scoped intent. Supply complete provenance and native checks for this mutation, finalize the core, and obtain an authentic candidate-persistence receipt that binds $p_t,d_{t+1},d_t^{\mathrm{core}}$ and the required scope and commit time. Choose a policy without an additional head-evidence obligation, or supply valid historical evidence when required.

The lineage, authority, provenance, and deterministic application checks pass. The unchanged non-normative components satisfy their predicates. Deletion of the protected $\psi$ contradicts $\Pi_{t,\mathrm{norm}}$, so $I_{\mathrm{norm}}=\mathsf{FAIL}$ and aggregation returns $\mathsf{INVALID}$. The receipt asserts only candidate persistence; the construction does not assume that a conforming activation service admits this candidate.
\end{proof}

\subsection{Theorem~\ref{thm:sufficiency}: Conditional Soundness}
\label{app:proof-thm2}

\begin{proof}
Condition on A1's cryptographic binding and signature properties holding for this verification. By Definition~\ref{def:cct}, $\mathsf{VALID}$ means that every required result in $\mathcal{C}$ is $\mathsf{PASS}$. A3 makes successful policy resolution establish the committed historical policy and protocol compatibility. The remaining checks establish Definition~\ref{def:admissible-transition}'s six clauses:
\begin{enumerate}\setlength{\itemsep}{1.5pt}\setlength{\parskip}{0pt}\setlength{\parsep}{0pt}
    \item $I_{\mathrm{lin}}$ establishes the expected identity, epoch, protocol, predecessor digest, and required historical head predicate under A1 and A3.
    \item $C_{\mathrm{apply}}$ compares the candidate digest with the digest of deterministic replay. A2 gives the specified replay state; A1 makes matching digests bind equal canonical states.
    \item $I_{\mathrm{auth}}$ establishes each operation's scoped authorization under pre-state governance and the resolved policy by A3, with signature binding from A1.
    \item $I_{\mathrm{prov}}$ establishes all policy-required material dependencies and their integrity by A3. Checking only dependencies voluntarily listed by a proposer would not satisfy this assumption.
    \item For each semantic invariant, a native $\mathsf{PASS}$ implies its predicate by A3; a resolved attested $\mathsf{PASS}$ implies it by A4. Cryptographic authentication supplies signed-record integrity, not this semantic implication.
    \item $I_{\mathrm{lin}}$ also establishes a historically valid receipt binding identity, epoch, predecessor, successor, core, and commit time. A3 supplies the asserted persistence semantics; A1 supplies authenticity and binding.
\end{enumerate}
Their conjunction is exactly $X_t\overset{\tau_t}{\rightsquigarrow}_{\Pi_t}X_{t+1}$. Removing the conditioning adds at most the negligible probability of a cryptographic violation in the fixed finite verification.
\end{proof}

The argument proves composition of assumed checker guarantees. It does not discharge A3 or A4 for an implementation, infer evaluator correctness from signatures, or provide an error bound for fallible semantic validators. It also remains relative to the supplied predecessor and context; validating a historical record does not establish that its presenter is the currently authorized runtime.

\subsection{Proposition~\ref{prop:compositional}: Transitive Preservation}
\label{app:proof-prop3}

\begin{proof}
Induct on the nonempty chain length $n\ge1$. For $n=1$, single-step admissibility implies $\mathcal{P}(X_0,X_1)$ by hypothesis. For $n=k+1$, the first $k$ steps imply $\mathcal{P}(X_0,X_k)$ by induction, and the final step implies $\mathcal{P}(X_k,X_{k+1})$. Transitivity gives $\mathcal{P}(X_0,X_{k+1})$.

For the stated instances, a sequence of scoped receipts yields a digest-ancestry path; it does not imply one direct receipt from the first state to the last. The ordinary order on event-time maxima and the historical-preservation preorder are transitive, giving their endpoint properties. For fixed constitutional anchors $\mathcal{A}_{\mathrm{core}}$, their inclusion in the initial state's anchors and preservation under every resolved policy imply inclusion at each later state by induction. This uses the explicit every-policy protection premise; an admissible amendment that removed that protection would fall outside it. No reflexive property is needed, because $n=0$ is excluded.
\end{proof}

\section{Detailed Transition Taxonomy in IdentityLineageBench}
\label{app:transition-taxonomy}

{\small
This appendix specifies all 24 transition families under frozen $\Pi$ and $\Gamma$. Canonical fixtures name primary violations; the separately generated single-fault subset establishes exactly one decisive checker for ablation. I8 intentionally combines lineage and temporal failures. Labels describe synthetic record conformance. Required evidence that is absent or unavailable yields $\mathsf{UNKNOWN}$; a present contradiction yields $\mathsf{FAIL}$ and dominates any concurrent evidence gap. Authenticity of an acquisition record does not establish the external truth of its contents.

\subsection{Legitimate Transition Families (L1--L10)}
\label{app:legit-tax}

\begin{enumerate}\setlength{\itemsep}{0pt}\setlength{\parskip}{0pt}\setlength{\parsep}{0pt}
    \item \textbf{L1: Factual Learning.}
    \emph{Precondition:} Proposition $\chi \notin K_t$.
    \emph{Delta:} $\Delta_t = \{\operatorname{InsertKnowledge}(\chi, \pi)\}$.
    \emph{Evidence:} $E_t$ binds a resolvable source URI, digest, and acquisition time.
    \emph{Target Invariant:} $I_{\mathrm{prov}} = \mathsf{PASS}, I_{\mathrm{temp}} = \mathsf{PASS}$. Fixture label: $\mathsf{VALID}$.
    
    \item \textbf{L2: Episodic Memory Formation.}
    \emph{Precondition:} New dialogue event has timestamp $s(e_{\mathrm{new}})>h(H_t)$ in the lineage's timestamp domain.
    \emph{Delta:} $\Delta_t = \{\operatorname{AppendChronicle}(e_{\mathrm{new}})\}$.
    \emph{Evidence:} Signed interaction session manifest.
    \emph{Target Invariant:} $I_{\mathrm{hist}} = \mathsf{PASS}, I_{\mathrm{lin}} = \mathsf{PASS}$. Fixture label: $\mathsf{VALID}$.
    
    \item \textbf{L3: Evidence-Based Belief Revision.}
    \emph{Precondition:} Existing belief $B_t(\phi) = \mathsf{TRUE}$, contradictory observation $e \in E_t$.
    \emph{Delta:} $\Delta_t = \{\operatorname{ReviseBelief}(\phi, \mathsf{FALSE}, \text{justification}=e)\}$.
    \emph{Evidence:} Authenticated support binds identity, transition epoch, proposition, resulting value, and confidence. The previous belief snapshot remains in the revision history. The native checker also requires scoped support for new beliefs and metadata-only revisions; evidence presence alone is insufficient.
    \emph{Target Invariant:} $I_{\mathrm{bel}} = \mathsf{PASS}, I_{\mathrm{prov}} = \mathsf{PASS}$. Fixture label: $\mathsf{VALID}$.
    
    \item \textbf{L4: Relational Deepening.}
    \emph{Precondition:} Prior distinct authenticated sessions with $u$ meet $\theta_{\mathrm{trust}}$.
    \emph{Delta:} $\Delta_t = \{\operatorname{UpdateTrustTier}(u, \text{Tier}_{\mathrm{elevated}})\}$.
    \emph{Evidence:} Positive interaction records bind identity, epoch, entity, session identifier, and acquisition time. Counts equal the number of unique recorded sessions; existing identifiers form a preserved prefix. Count-only changes require the same support as trust changes, preventing a fabricated count from becoming evidence in a later transition.
    \emph{Target Invariant:} $I_{\mathrm{rel}} = \mathsf{PASS}, I_{\mathrm{auth}} = \mathsf{PASS}$. Fixture label: $\mathsf{VALID}$.
    
    \item \textbf{L5: Memory Consolidation.}
    \emph{Precondition:} Raw working-memory buffer turns $|M_{\mathrm{buffer}}| > K$.
    \emph{Delta:} $\Delta_t = \{\operatorname{EvictBuffer}(), \operatorname{InsertSummary}(S_{\mathrm{consolidated}})\}$.
    \emph{Target Invariant:} $I_{\mathrm{hist}} = \mathsf{PASS}, I_{\mathrm{prov}} = \mathsf{PASS}$ ($H_t\le_{\mathrm{hist}}H_{t+1}$). Fixture label: $\mathsf{VALID}$.
    
    \item \textbf{L6: Policy-Permitted Forgetting.}
    \emph{Precondition:} Ephemeral session tokens exceed time-to-live ($\mathrm{age} > \mathrm{TTL}$).
    \emph{Delta:} $\Delta_t = \{\operatorname{PruneExpiredContext}(\mathrm{session\_id})\}$.
    \emph{Target Invariant:} $I_{\mathrm{hist}} = \mathsf{PASS}, I_{\mathrm{auth}} = \mathsf{PASS}$ ($H_t\le_{\mathrm{hist}}H_{t+1}$; tombstones satisfy $\Pi_{\mathrm{hist}}$). Fixture label: $\mathsf{VALID}$.
    
    \item \textbf{L7: Provenance-Backed Error Correction.}
    \emph{Precondition:} Supplied correction evidence $\rho \in E_t$.
    \emph{Delta:} $\Delta_t = \{\operatorname{AnnotateCorrection}(e_k, \text{correction}=\rho)\}$.
    \emph{Target Invariant:} $I_{\mathrm{bel}} = \mathsf{PASS}, I_{\mathrm{hist}} = \mathsf{PASS}$. Fixture label: $\mathsf{VALID}$.
    
    \item \textbf{L8: Software Runtime Upgrade.}
    \emph{Precondition:} Authorized administrator proposes a runtime-configuration change. The fixture updates metadata without installing or running a binary.
    \emph{Delta:} $\Delta_t = \{\operatorname{UpdateRuntimeConfig}(\text{version}_{t+1})\}$.
    \emph{Target Invariant:} $I_{\mathrm{auth}} = \mathsf{PASS}, I_{\mathrm{lin}} = \mathsf{PASS}$. Fixture label: $\mathsf{VALID}$.
    
    \item \textbf{L9: Authenticated Crash Recovery.}
    \emph{Precondition:} A checkpoint plus authenticated suffix replay reconstructs the current accepted state, preserving intervening commits. The fixture starts from that reconstructed state and does not implement checkpoint loading, replay, or crash recovery.
    \emph{Delta:} A synthetic forward recovery record names the checkpoint and advances from the current predecessor; it does not overwrite that predecessor with an old snapshot.
    \emph{Target Invariant:} $I_{\mathrm{lin}} = \mathsf{PASS}, I_{\mathrm{auth}} = \mathsf{PASS}$. Fixture label: $\mathsf{VALID}$.
    
    \item \textbf{L10: Substrate Migration.}
    \emph{Precondition:} Migration manifest signed by identity custodian.
    \emph{Delta:} $\Delta_t = \{\operatorname{RebindSubstrate}(\text{Model}_B, \text{AdapterProof})\}$.
    This fixture changes a synthetic model identifier and manifest binding; no foundation model executes, and behavioral preservation across models is unmeasured.
    \emph{Target Invariant:} $I_{\mathrm{auth}} = \mathsf{PASS}, I_{\mathrm{lin}} = \mathsf{PASS}$. Fixture label: $\mathsf{VALID}$.
\end{enumerate}

\subsection{Invalid Mutation Families (I1--I10)}
\label{app:invalid-tax}

\begin{enumerate}\setlength{\itemsep}{0pt}\setlength{\parskip}{0pt}\setlength{\parsep}{0pt}
    \item \textbf{I1: Autobiographical Fabrication.}
    \emph{Mutation:} Insert an event whose claimed content contradicts its authenticated acquisition record, retaining a timestamp consistent with the chronicle.
    \emph{Failure Mode:} Isolates $I_{\mathrm{hist}} = \mathsf{FAIL}$ (with $I_{\mathrm{temp}} = \mathsf{PASS}$); backdated variants additionally fail $I_{\mathrm{temp}}$. Fixture label: $\mathsf{INVALID}$.
    Removing the acquisition evidence instead yields $\mathsf{UNKNOWN}$ absent an independent contradiction.
    
    \item \textbf{I2: Retroactive Belief Rewriting.}
    \emph{Mutation:} Declared mutation modifying belief revision and history metadata in $B_{t+1}$ to falsely represent prior stances while leaving authenticated chronicle $H_t$ unchanged.
    \emph{Failure Mode:} Isolates $I_{\mathrm{bel}} = \mathsf{FAIL}$ (with $I_{\mathrm{hist}} = \mathsf{PASS}$ and $C_{\mathrm{apply}} = \mathsf{PASS}$); chronicle-altering variants additionally fail $I_{\mathrm{hist}}$. Fixture label: $\mathsf{INVALID}$.
    
    \item \textbf{I3: Policy-Forbidden Normative Modification.}
    \emph{Precondition:} Credentials meet governance threshold and amendment scope ($I_{\mathrm{lin}}=I_{\mathrm{auth}}=\mathsf{PASS}$).
    \emph{Mutation:} Delete or weaken $\psi\in N_{\mathrm{core}}$, non-amendable under $\Pi_{\mathrm{norm}}$.
    \emph{Failure Mode:} Only $I_{\mathrm{norm}}=\mathsf{FAIL}$. Fixture label: $\mathsf{INVALID}$.
    
    \item \textbf{I4: Arbitrary Relationship Injection.}
    \emph{Precondition:} The submitting principal has valid generic mutation authority ($I_{\mathrm{auth}} = \mathsf{PASS}$), but the authenticated predecessor establishes fewer prior sessions with $u$ than policy requires for elevation.
    \emph{Mutation:} Elevate $u$ to high trust despite that established shortfall. The native predicate also rejects conflicting session identifiers and inflated counts, including count-only changes that could otherwise support a later elevation.
    \emph{Failure Mode:} Isolates $I_{\mathrm{rel}} = \mathsf{FAIL}$; unsigned or unauthorized variants additionally fail $I_{\mathrm{auth}}$. Fixture label: $\mathsf{INVALID}$.
    
    \item \textbf{I5: Historical Suppression.}
    \emph{Mutation:} Omitting committed event $e_k \in H_t$ without authenticated deletion tombstone via declared silent suppression.
    \emph{Failure Mode:} Violates $H_t\le_{\mathrm{hist}}H_{t+1}$ and fails $I_{\mathrm{hist}} = \mathsf{FAIL}$; passes $C_{\mathrm{apply}}$. Fixture label: $\mathsf{INVALID}$.
    
    \item \textbf{I6: Unlogged Memory Mutation.}
    \emph{Mutation:} State $X_{t+1}$ contains modified values not declared in $\Delta_t$.
    \emph{Failure Mode:} Fails deterministic state application check $C_{\mathrm{apply}}$ ($\operatorname{Apply}(X_t, \Delta_t) \ne X_{t+1}$). Fixture label: $\mathsf{INVALID}$.
    
    \item \textbf{I7: Unauthorized Succession Claim.}
    \emph{Precondition:} Correct predecessor, scope, receipt, and head evidence give $I_{\mathrm{lin}}=\mathsf{PASS}$.
    \emph{Mutation:} Succession claimed using credentials from an unauthorized principal or insufficient scope under $N_t,\Pi_{\mathrm{auth}}$.
    \emph{Failure Mode:} Only $I_{\mathrm{auth}}=\mathsf{FAIL}$. Fixture label: $\mathsf{INVALID}$. Lineage-only verifiers accept; missing evidence yields $\mathsf{INDETERMINATE}$.
    
    \item \textbf{I8: Stale Rollback Masquerade.}
    \emph{Mutation:} Presenting $X_{t-k}$ as $X_{t+1}$ with a provably stale predecessor and temporal regression $h(H_{t+1}) < h(H_t)$ via typed checkpoint restoration.
    \emph{Failure Mode:} Fails $I_{\mathrm{lin}} = \mathsf{FAIL}$ and $I_{\mathrm{temp}} = \mathsf{FAIL}$; passes $C_{\mathrm{apply}}$. Fixture label: $\mathsf{INVALID}$.
    
    \item \textbf{I9: Forged Persistence Receipt.}
    \emph{Mutation:} Mismatched parent hash or corrupted cryptographic signature in $R_t^{\mathrm{commit}}$.
    \emph{Failure Mode:} Fails $I_{\mathrm{lin}}$ (Stage 1). Fixture label: $\mathsf{INVALID}$.
    
    \item \textbf{I10: Corrupted Provenance.}
    \emph{Mutation:} A resolved provenance object hashes to a different digest from the one committed in $E_t$.
    \emph{Failure Mode:} Fails $I_{\mathrm{prov}}$ ($\mathsf{FAIL}$). Fixture label: $\mathsf{INVALID}$.
\end{enumerate}

\subsection{Indeterminate Transition Families (D1--D4)}
\label{app:indet-tax}

\begin{enumerate}\setlength{\itemsep}{0pt}\setlength{\parskip}{0pt}\setlength{\parsep}{0pt}
    \item \textbf{D1: Omitted Required Policy Evidence.}
    \emph{Scenario:} Lineage evidence is complete and valid ($I_{\mathrm{lin}} = \mathsf{PASS}$), but core $\tau_t^{\mathrm{core}}$ omits a non-lineage evidence field or external attestation required by active policy $\Pi_t$ (e.g., missing policy-required semantic attestation $V_t$ or provenance evidence $E_t$).
    \emph{Verification Behavior:} Lineage-only verifiers evaluate complete lineage and return $\mathsf{VALID}$. Full CCT checkers evaluating the missing semantic evidence return $\mathsf{UNKNOWN} \implies \mathsf{INDETERMINATE}$. Fixture label: $\mathsf{INDETERMINATE}$. (Empty $V_t$ does not trigger indeterminacy when active invariants are evaluated locally.)

    \item \textbf{D2: Ambiguous Canonical Head.}
    \emph{Scenario:} $\Pi_{\mathrm{lin}}$ requires canonical-head evidence, but $\Gamma$ lacks conclusive ledger/lock evidence for the scoped predecessor and proposal.
    \emph{Verification Behavior:} Stage~1's $\operatorname{CheckCanonicalHead}$ returns $\mathsf{UNKNOWN}$; other checks pass. Fixture label: $\mathsf{INDETERMINATE}$.

    \item \textbf{D3: Unreachable External Citation.}
    \emph{Scenario:} External source URI in $E_t$ cannot be retrieved over network to verify cryptographic content digest.
    \emph{Verification Behavior:} Returns $\mathsf{UNKNOWN} \implies \mathsf{INDETERMINATE}$. Fixture label: $\mathsf{INDETERMINATE}$.

    \item \textbf{D4: Conflicting Evaluator Attestations.}
    \emph{Scenario:} Contradictory attestations in $V_t$ lack a policy resolution quorum.
    \emph{Verification Behavior:} Yields $\mathsf{UNKNOWN} \implies \mathsf{INDETERMINATE}$. Fixture label: $\mathsf{INDETERMINATE}$.
\end{enumerate}
}

\section{Benchmark Data Schema and Serialization}
\label{app:benchmark-schema}

This appendix illustrates the witness envelope and fixture references. Digests and signatures are abbreviated; the executable typed schemas in \path{code/src/cctbench/schema/} define the implemented policy subset.

\subsection{Cognitive Transition Witness Example}
\label{app:witness-schema}

Listing~\ref{lst:witness-schema} encodes $\tau_t=\langle\langle q_t,V_t\rangle,R_t^{\mathrm{commit}}\rangle$.

\begin{lstlisting}[basicstyle=\fontsize{8pt}{9.2pt}\selectfont\ttfamily, aboveskip=2pt, belowskip=2pt, caption={Illustrative JSON envelope for a Cognitive Transition Witness ($\tau_t$).}, label={lst:witness-schema}]
{
  "witness_version": "1.0.0",
  "witness_core": {
    "proposal": {
      "identity_id": "urn:pci:agent:sol-7729",
      "epoch": 42,
      "predecessor_commitment":
        "sha256:7f83b1...26d9069",
      "mutation_manifest": [{
        "op_id": "delta-01",
        "op": "REVISE_BELIEF",
        "target_component": "beliefs",
        "args": {
          "proposition_id": "prop:mars_water",
          "previous_value": false,
          "new_value": true,
          "confidence": 0.94
        },
        "pre": {"previous_value": false},
        "justification_ref": "dep:paper_2026"
      }],
      "authority_bundle": {"signers": [{
        "principal": "urn:pci:key:gov-primary",
        "role": "IDENTITY_CUSTODIAN",
        "scope": "ALL_MUTATIONS",
        "signature": "ed25519:3b1a...4f92"
      }]},
      "provenance_manifest": {
        "dependencies": [{
          "dep_id": "dep:paper_2026",
          "type": "DOCUMENT_EVIDENCE",
          "uri": "urn:fixture:paper_2026",
          "digest": "sha256:a591a6...ad9f146e"
        }],
        "source_citations": ["dep:paper_2026"],
        "temporal_attestations": []
      }
    },
    "validation_records": [{
      "check_id": "I_bel",
      "evaluator": "urn:pci:eval:belief",
      "evaluator_version": "1.0.0",
      "policy_version": "v1.2.0",
      "proposal_digest": "sha256:af57...912e",
      "status": "PASS",
      "timestamp": "2026-08-30T19:00:00Z",
      "signature": "ed25519:9e21...88c4"
    }]
  },
  "commit_receipt": {
    "receipt_id": "rcpt:commit-epoch-42",
    "kernel_id": "urn:pci:kernel:openkedge",
    "identity_id": "urn:pci:agent:sol-7729",
    "epoch": 42,
    "parent_digest": "sha256:7f83b1...26d9069",
    "successor_digest": "sha256:e3b0...b855",
    "witness_core_digest": "sha256:49c0...ef93",
    "commit_timestamp": "2026-08-30T19:00:05Z",
    "kernel_signature": "ed25519:7a81...11bc"
  }
}
\end{lstlisting}

\subsection{Benchmark Instance Serialization}
\label{app:instance-schema}

Listing~\ref{lst:fixture-schema} sketches references to complete objects; it is not an executable fixture. The artifact embeds \texttt{predecessor\_state}, \texttt{candidate\_successor}, \texttt{transition\_witness}, and \texttt{policy} directly, and adds identity, seed, split, and subset fields. Fixture metadata and derived state digests remain outside the committed objects.

\begin{lstlisting}[basicstyle=\fontsize{8pt}{9.2pt}\selectfont\ttfamily, aboveskip=2pt, belowskip=2pt, caption={Reference-form sketch of an \lineagebench evaluation fixture.}, label={lst:fixture-schema}]
{
  "fixture_id": "lineagebench-fixture-0142",
  "category": "INVALID_MUTATION",
  "transition_family": "I2_RETROACTIVE_BELIEF",
  "target_invariants": ["I_bel"],
  "ground_truth_label": "INVALID",
  "policy_ref": "frozen:policy-v1.2.0",
  "verification_context": {
    "identity_id": "urn:pci:agent:sol-7729",
    "epoch": 42, "protocol_version": "1.0.0",
    "branch_evidence_ref": "frozen:head-42"
  },
  "predecessor_state_ref": "frozen:state-42",
  "transition_witness_ref": "frozen:witness-42",
  "candidate_successor_ref": "frozen:state-43",
  "evaluation_probes": [{
    "probe_id": "probe-retro-01",
    "query_epoch": 30,
    "query": "Did you believe this before 40?",
    "expected_behavior": "Prior skepticism"
  }]
}
\end{lstlisting}

The field name \texttt{ground\_truth\_label} is retained for artifact compatibility. Its value is the generator-defined fixture verdict under the frozen policy and verification context; it does not denote externally observed truth about a person or deployed agent.

\subsection{Envelope Metadata and Serialization}
\label{app:envelope-metadata}

\textbf{Bound version and scope.} The envelope's \texttt{witness\_version} selects cryptographic protocol version $v$, checked against $\Gamma$; it is not unauthenticated routing metadata. Identity and epoch are inside $q_t$ and repeated in the signed receipt.

\textbf{Canonical preimages.} Mathematical tuples map to the named JSON objects above. $\operatorname{Hash}(D,\operatorname{CanonicalEncode}(O))$ means SHA-256 of RFC~8785 JCS encoding of the array $[D,O]$. Tags have form \texttt{PCI-CCT-<TYPE>-<v>}, with distinct types \texttt{STATE}, \texttt{PROPOSAL}, \texttt{WITNESS-CORE}, \texttt{AUTHORITY}, \texttt{ATTESTATION}, \texttt{POLICY}, and \texttt{COMMIT-RECEIPT}. Ordered mutations retain order; set-valued arrays use lexicographic order of canonical member bytes. Derived display digests are excluded from the objects they hash.

\textbf{Signatures and sequencing.} The authority preimage is the proposal without \texttt{authority\_bundle}, augmented with signer \texttt{principal}, \texttt{role}, and \texttt{scope}. Evaluators and the kernel omit their own signature fields. All sign the corresponding tagged hash. Validators bind $d_t^{\mathrm{prop}}$, whereas the receipt binds $d_t^{\mathrm{core}}$, including $V_t$. Signed validation fields include the check, evaluator/version, policy version, result, and timestamp. Each external record must use a policy version compatible with the resolved predecessor policy satisfying $\operatorname{ResolvePolicy}(N_t.\mathit{policy\_ref},v,\Omega)=\langle\Pi_t,\mathsf{PASS}\rangle$, and the verifier enforces its quorum. The artifact assumes the policy is already resolved; it does not execute this historical resolution interface. Proposal dependencies precede $q_t$ and cannot reference this transition's validation records, finalized core, or receipt.

\end{document}